\pdfoutput=1
\documentclass[11pt]{amsart}

\usepackage{amsmath}
\usepackage{amsthm}
\usepackage{graphicx}
\usepackage{booktabs}
\usepackage{array}
\usepackage{url}
\usepackage{amssymb}

\newtheorem{definition}{Definition}[section]
\newtheorem{proposition}{Proposition}[section]
\newtheorem{theorem}{Theorem}[section]
\newtheorem{lemma}{Lemma}[section]
\newtheorem{assumption}{Assumption}[section]
\newtheorem{remark}{Remark}[section]

\newcommand{\E}{\mathbb{E}}
\newcommand{\R}{\mathbb{R}}
\newcommand{\ii}{\mathrm{i}}

\newcommand{\Cov}{\operatorname{Cov}}
\newcommand{\Rea}{\operatorname{Re}}
\newcommand{\Ima}{\operatorname{Im}}
\newcommand{\argmin}{\operatorname*{arg\,min}}

\begin{document}

\title[Moment-free Kunchenko stochastic polynomials]{Moment-Free Kunchenko
Stochastic Polynomials via Empirical Characteristic Functions}

\author{S. Zabolotnii}
\address{Cherkasy State Business College, Cherkasy 18028, Ukraine}
\address{State Scientific Research Institute of Armament and Military Equipment
Testing and Certification, Cherkasy, Ukraine}
\address{Uzhhorod National University, Uzhhorod, Ukraine}
\curraddr{}
\email{zabolotnii.serhii@csbc.edu.ua}

\date{}
\dedicatory{}
\keywords{empirical characteristic function, stochastic polynomial, Kunchenko
space, heavy-tailed distribution, moment-free estimation, Lean verification}
\thanks{The Lean 4 material cited in the paper is used only as an independent
check of deterministic algebraic identities.  All probabilistic arguments are
conventional.}

\begin{abstract}
We give a characteristic-function formulation of Kunchenko's
stochastic-polynomial construction for settings in which raw moments may fail
to exist.  In the finite-variance trigonometric case, the coefficients of the
Kunchenko normal system are expressed through the characteristic
function and its derivative.  In the moment-free case, empirical characteristic
functions on a fixed finite frequency grid define a bounded discrepancy
geometry that remains meaningful for Cauchy, symmetric stable, and other
heavy-tailed laws.  We prove well-definedness and finite-grid almost sure
consistency of this empirical characteristic-function geometry.  We introduce
the associated minimum-CF-distance estimator and establish its identifiability,
strong consistency, and asymptotic normality on a fixed grid, with a covariance
built from bounded trigonometric moments that stays finite even for Cauchy and
stable laws; refining the grid increases the optimal-weight information
monotonically to the Fisher information, so the estimator is asymptotically
efficient in the dense-grid limit.  We also relate bounded sine scores to weak
stochastic-polynomial estimating equations.  A
small Lean 4 / Mathlib supplement checks selected deterministic identities
underlying the bounded-score construction; convergence arguments and
statistical interpretation remain outside the formalization.
\end{abstract}

\maketitle

\section{Introduction}

The stochastic-polynomial apparatus of Yu.\,P.~Kunchenko rests on a space with a
generating element, on moment and cumulant (correlation) characteristics, and on
normal equations for polynomial-type
approximations~\cite{kunchenko2003,kunchenko2006}.  Built on this foundation, it
has grown along three branches: parameter estimation by the polynomial
maximization method~\cite{kunchenko2002,zabolotnii2019}; statistical detection and change-point
analysis by polynomial decision rules; and pattern recognition by decomposition
in the space with a generating element.  Each branch is effective when the
moments it requires exist and are numerically stable, and each is less
satisfactory for distributions with heavy tails.  For Cauchy laws, stable laws
with characteristic exponent below two, and many finite samples drawn from
moment-unstable models, variance- and cumulant-based descriptions either have no
mathematical meaning or become dominated by rare observations; the heavy-tailed
and stable models at issue are surveyed by Samorodnitsky and
Taqqu~\cite{samorodnitsky1994} and by Resnick~\cite{resnick2007}.

Characteristic functions provide a natural way to separate two issues that are
often conflated.  First, in the finite-moment trigonometric regime, the
Kunchenko normal equations can already be written in the language of the
characteristic function~\cite{zabolotnii2010trig}.  Second, in the moment-free regime, the empirical
characteristic function is a bounded object on every finite frequency grid, so
it can define a finite-dimensional discrepancy even when no variance exists.
The two constructions share characteristic-function evaluations, but they do
not optimize the same functional.

The characteristic-function discrepancy used below is related to several
established constructions.  Estimation by minimizing an integrated squared
distance between empirical and model characteristic functions goes back to
Heathcote~\cite{heathcote1977} and, for stable laws, to
Press~\cite{press1972} and Koutrouvelis~\cite{koutrouvelis1980}; its efficiency
and its generalized-method-of-moments interpretation are due to Feuerverger and
McDunnough~\cite{feuerverger1981} and to Carrasco and
Florens~\cite{carrasco2000}.  These ideas extend to dynamic models through a
continuum of moment conditions~\cite{carrasco2007} and, when the characteristic
function has no closed form, to simulation-based and indirect-inference
estimators~\cite{gourieroux1993}; the heavy-tailed families treated here have
closed-form characteristic functions, so the estimator below evaluates them
directly on a finite grid and needs no simulation.  Weighted squared CF
differences also underlie goodness-of-fit and normality tests, such as the
Epps--Pulley statistic~\cite{epps1983} and the Baringhaus--Henze
test~\cite{baringhaus1988}, and, for a translation-invariant characteristic
kernel, they coincide with the maximum mean discrepancy of kernel two-sample
testing~\cite{sriperumbudur2010, gretton2012} and with the energy distance, the
distance-based and kernel forms being equivalent~\cite{szekely2013,
sejdinovic2013}.  Our aim is not to extend this body of methods but to place the
Kunchenko stochastic-polynomial construction within it: we show that its
trigonometric normal system and a moment-free finite-grid CF discrepancy are
two faces of the same characteristic-function description, and we give the
moment-free asymptotics of the associated estimator.  The asymptotic theory we
invoke is the standard generalized-method-of-moments and
empirical-characteristic-function machinery; what is new is the bridge to
Kunchenko stochastic polynomials, the bounded moment-free finite-grid pathway
valid for laws without variance, and a machine-checked deterministic core for
its algebraic identities.

The contribution of this paper is deliberately narrow.  We formulate a
two-regime characteristic-function extension of Kunchenko stochastic-polynomial
constructions; express the trigonometric finite-variance normal-system terms
through the characteristic function and its derivative; define a finite-grid
empirical characteristic-function distance that requires no moment assumptions
and prove its boundedness and finite-grid consistency; establish
identifiability, strong consistency, and asymptotic normality of the associated
minimum-CF-distance estimator on a fixed grid, with a covariance built from
bounded trigonometric moments and hence valid for Cauchy and stable laws; prove,
by an elementary projection argument, that refining the grid raises the optimally
weighted information monotonically to the Fisher information, so that the
estimator is asymptotically efficient in the dense-grid limit; and isolate
bounded sine scores as a weak stochastic-polynomial estimating equation.  The
paper does not claim a complete characteristic-function theory of parameter
estimation, pattern classification, or change-point detection, nor does it use
minimum mean-square-error language in infinite-variance settings.

The proof style is conventional.  Lean 4 is used only as a reproducibility audit
for deterministic algebraic identities such as boundedness and oddness of the
sine score, a Cauchy-score identity, and a linear-algebra identifiability core.
The law of large numbers, differentiability of characteristic functions, and
statistical consistency statements are proved in the usual mathematical form.

\section{Generating-element systems}

Let $X$ be a real random variable and let
$\varphi_1,\ldots,\varphi_S$ be real basis functions.  In Kunchenko's
generating-element construction~\cite{kunchenko2003,kunchenko2006} one
approximates $X$ by a centered linear
combination of these basis functions,
\[
  X \approx K_0+\sum_{j=1}^S K_j\varphi_j(X),
\]
where the coefficients are obtained from a covariance normal system when the
required second moments are finite.

\begin{definition}[Finite-variance stochastic-polynomial system]\label{def:fv}
Assume that $\E X^2<\infty$ and
$\E\varphi_j(X)^2<\infty$, $j=1,\ldots,S$.  Put
\[
  F_{jk}=\Cov(\varphi_j(X),\varphi_k(X)),\qquad
  B_j=\Cov(X,\varphi_j(X)).
\]
If $F$ is nonsingular, the Kunchenko coefficient vector
$K=(K_1,\ldots,K_S)^T$ is defined by
\[
  FK=B.
\]
The intercept is
\[
  K_0=\E X-\sum_{j=1}^S K_j\E\varphi_j(X).
\]
\end{definition}

The associated residual mean-square error and the usual information functional
are meaningful only under finite second moments.  This restriction is not a
technical nuisance; it is the reason for introducing the moment-free
characteristic-function regime below.

\section{Trigonometric systems through characteristic functions}

Let
\[
  f_X(u)=\E\exp(\ii uX),\qquad u\in\R,
\]
be the characteristic function of $X$~\cite{lukacs1970}.  If $\E|X|<\infty$, then
\[
  f'_X(u)=\ii\,\E\{X\exp(\ii uX)\}.
\]
Under the finite-variance assumptions of Definition~\ref{def:fv} this identity applies
to the trigonometric normal system.

\begin{proposition}[Trigonometric coefficients]\label{prop:trig}
Let $\E X^2<\infty$, let $p>0$, and let $r,k$ be positive integers.  For the
cosine basis $\varphi_r(X)=\cos(rpX)$,
\[
  \E\cos(rpX)=\Rea f_X(rp),
  \qquad
  \E\{X\cos(rpX)\}=\Ima f'_X(rp),
\]
and hence
\[
  B_r^{(c)}
    =\Ima f'_X(rp)-\E X\,\Rea f_X(rp).
\]
Moreover,
\[
  F_{rk}^{(c)}
    =\frac{1}{2}\Rea\{f_X((r-k)p)+f_X((r+k)p)\}
     -\Rea f_X(rp)\Rea f_X(kp).
\]
For the sine basis $\varphi_r(X)=\sin(rpX)$,
\[
  \E\sin(rpX)=\Ima f_X(rp),
  \qquad
  \E\{X\sin(rpX)\}=-\Rea f'_X(rp),
\]
and therefore
\[
  B_r^{(s)}
    =-\Rea f'_X(rp)-\E X\,\Ima f_X(rp),
\]
with
\[
  F_{rk}^{(s)}
    =\frac{1}{2}\Rea\{f_X((r-k)p)-f_X((r+k)p)\}
     -\Ima f_X(rp)\Ima f_X(kp).
\]
\end{proposition}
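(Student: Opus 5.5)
The plan is to read off every expectation from the real and imaginary parts of $f_X$ and $f'_X$, and then reduce the covariances to such expectations by product-to-sum formulas. The only analytic input is the derivative identity $f'_X(u)=\ii\,\E\{X\exp(\ii uX)\}$, stated before the proposition; it holds because $\E X^2<\infty$ implies $\E|X|<\infty$. It is justified in the usual way by dominated convergence, since the difference quotient of $e^{\ii uX}$ is bounded by $|X|$. All the basis functions are bounded and $X$ has a finite second moment, so every covariance below exists. In particular, the assumptions of Definition~\ref{def:fv} hold and $\Cov(U,V)=\E UV-\E U\,\E V$ may be used freely.

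\textbf{First moments.} Write $\exp(\ii uX)=\cos(uX)+\ii\sin(uX)$ and take expectations at $u=rp$. Separating real and imaginary parts gives $\E\cos(rpX)=\Rea f_X(rp)$ and $\E\sin(rpX)=\Ima f_X(rp)$.

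\textbf{Moments against $X$.} Apply the same decomposition to the derivative identity:
\[
f'_X(u)=\ii\,\E\{X\cos(uX)\}-\E\{X\sin(uX)\}.
\]
Here both expectations are real and finite. Matching parts at $u=rp$ yields $\Ima f'_X(rp)=\E\{X\cos(rpX)\}$ and $\Rea f'_X(rp)=-\E\{X\sin(rpX)\}$. Then $B_r^{(c)}=\E\{X\cos(rpX)\}-\E X\,\E\cos(rpX)$ and the analogous sine expression give the two stated formulas for $B_r$.

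\textbf{Second-moment terms.} Use
\[
\cos a\cos b=\tfrac12[\cos(a-b)+\cos(a+b)],\qquad \sin a\sin b=\tfrac12[\cos(a-b)-\cos(a+b)],
\]
with $a=rpX$ and $b=kpX$. This gives
\[
\E\cos(rpX)\cos(kpX)=\tfrac12\Rea\{f_X((r-k)p)+f_X((r+k)p)\},
\]
and similarly for the sine case with a minus sign. This uses $\E\cos(vX)=\Rea f_X(v)$ for every real $v$, including $v=(r-k)p\le 0$ and $v=0$, where $f_X(0)=1$. Subtracting the product of means computed in the first step gives $F^{(c)}_{rk}$ and $F^{(s)}_{rk}$.

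There is no genuine obstacle. The only point needing care is the justification of differentiating under the expectation, which has already been granted. Beyond that, one must keep the signs straight when matching the real and imaginary parts of $\ii\,\E\{X e^{\ii uX}\}$.
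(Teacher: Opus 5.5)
Your proof is correct and follows essentially the same route as the paper. Both arguments read off the first moments from $\Rea f_X$ and $\Ima f_X$. Both then separate the real and imaginary parts of $\ii\,\E\{X\exp(\ii uX)\}$, and finish with the product-to-sum identities and subtraction of the products of means; your dominated-convergence justification of the derivative identity is a small addition the paper takes as given.
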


\begin{proof}
The identities for the expectations of sine and cosine follow by taking the
real and imaginary parts of $f_X(rp)$.  Since
$f'_X(u)=\ii \E\{X\exp(\ii uX)\}$, we have
\[
  \E\{X\exp(\ii uX)\}=\frac{1}{\ii}f'_X(u)=-\ii f'_X(u).
\]
Taking real and imaginary parts yields
$\E\{X\cos(uX)\}=\Ima f'_X(u)$ and
$\E\{X\sin(uX)\}=-\Rea f'_X(u)$.  The formulas for $F_{rk}$ are obtained from
\[
  \cos a\cos b=\frac{1}{2}\{\cos(a-b)+\cos(a+b)\},
\]
and
\[
  \sin a\sin b=\frac{1}{2}\{\cos(a-b)-\cos(a+b)\},
\]
followed by subtraction of the products of the corresponding means.
\end{proof}

\begin{remark}
The derivative $f'_X$ is used only in the finite-first-moment, and here
finite-variance, regime.  The moment-free construction in the next section
does not require differentiability of $f_X$ and does not use covariance
normal equations.
\end{remark}

\section{Finite-grid empirical characteristic-function geometry}

Let $X_1,\ldots,X_n$ be observations from an arbitrary real distribution.  The
empirical characteristic function~\cite{feuerverger1977} is
\[
  \widehat f_n(u)=\frac{1}{n}\sum_{j=1}^n \exp(\ii uX_j).
\]
Fix a finite frequency grid $U=\{u_1,\ldots,u_M\}$ and positive weights
$w_1,\ldots,w_M$.  The weights are often normalized so that
$\sum_m w_m=1$, but normalization is not needed for the definition.

\begin{definition}[Finite-grid CF distance]\label{def:cfdist}
Let $g$ be a reference characteristic function.  Define
\[
  D_{U,w}(\widehat f_n,g)
    =\sum_{m=1}^M w_m|\widehat f_n(u_m)-g(u_m)|^2.
\]
For $\varepsilon>0$, define the log contrast
\[
  L_{U,w,\varepsilon}(\widehat f_n,g)
    =\log\{D_{U,w}(\widehat f_n,g)+\varepsilon\}.
\]
\end{definition}

This is not an $L^2$ approximation error in the original random-variable
space.  It is a finite-dimensional discrepancy between characteristic-function
vectors.  That distinction is essential in infinite-variance settings.

\begin{proposition}[Moment-free well-definedness]\label{prop:bdd}
For every sample $X_1,\ldots,X_n$, every finite grid $U$, every positive
weight vector $w$, and every characteristic function $g$,
\[
  0\le D_{U,w}(\widehat f_n,g)\le 4\sum_{m=1}^M w_m.
\]
If $\sum_m w_m=1$, then $D_{U,w}(\widehat f_n,g)\le 4$.
\end{proposition}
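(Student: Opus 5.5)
The plan is to bound each summand of $D_{U,w}(\widehat f_n,g)$ separately, pointwise in the frequency, and then add the bounds. Nonnegativity is immediate. Each term $w_m|\widehat f_n(u_m)-g(u_m)|^2$ is the product of a positive weight and the squared modulus of a complex number, so it is nonnegative, and hence so is the finite sum.

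For the upper bound, we first show that both characteristic-function vectors lie in the closed unit disc at every grid point. For the empirical characteristic function, $|\exp(\ii u_m X_j)|=1$ for every real $X_j$. The triangle inequality then gives
\[
  |\widehat f_n(u_m)|\le \frac1n\sum_{j=1}^n |\exp(\ii u_mX_j)| = 1 .
\]
This uses no property of the sample and, in particular, no moment assumption. For the reference $g$, which is the characteristic function of some law $\mu$, we have
\[
  |g(u)|=\Bigl|\int e^{\ii ux}\,\mu(dx)\Bigr|\le \int |e^{\ii ux}|\,\mu(dx)=1 .
\]
The integral exists because the integrand is bounded, so again no moments are needed. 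The triangle inequality then gives $|\widehat f_n(u_m)-g(u_m)|\le 2$, and hence $|\widehat f_n(u_m)-g(u_m)|^2\le 4$.

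Multiplying by $w_m>0$ and summing over $m=1,\ldots,M$ yields $D_{U,w}(\widehat f_n,g)\le 4\sum_m w_m$. The normalized case $\sum_m w_m=1$ follows by substitution.

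No step is a genuine obstacle. The only point to state carefully is that $|g|\le 1$ holds for an arbitrary characteristic function, because the expectation of a bounded complex integrand is always defined. This is exactly what makes the bound moment-free, in contrast with the covariance quantities of Definition~\ref{def:fv}.
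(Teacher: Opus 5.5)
Your proof is correct and follows the paper's argument: you bound $|\widehat f_n(u_m)|\le 1$ and $|g(u_m)|\le 1$, deduce $|\widehat f_n(u_m)-g(u_m)|\le 2$, then square and sum with the weights. You also spell out nonnegativity and why $|g|\le 1$ needs no moment assumption, both of which the paper leaves implicit.
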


\begin{proof}
For every $u$ and every observation $X_j$,
$|\exp(\ii uX_j)|=1$.  Hence $|\widehat f_n(u)|\le 1$.  Also every
characteristic function satisfies $|g(u)|\le 1$.  Therefore
$|\widehat f_n(u_m)-g(u_m)|\le 2$, and the stated bound follows after
squaring and summing with the weights.
\end{proof}

\begin{proposition}[Finite-grid consistency]\label{prop:consist}
Let $X_1,X_2,\ldots$ be i.i.d. with characteristic function $f_X$.
For every fixed finite grid $U=\{u_1,\ldots,u_M\}$,
\[
  \max_{1\le m\le M}|\widehat f_n(u_m)-f_X(u_m)|\to 0
  \quad\hbox{almost surely}.
\]
Consequently,
\[
  D_{U,w}(\widehat f_n,f_X)\to 0
  \quad\hbox{almost surely}.
\]
\end{proposition}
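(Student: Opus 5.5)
The plan is to apply the strong law of large numbers coordinatewise, one grid point and one real component at a time, and then pass to the maximum and to $D_{U,w}$ by finiteness of the grid.

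\textbf{Step 1 (one frequency).} Fix $u_m\in U$. The variables $\cos(u_mX_j)$, $j\ge1$, are i.i.d. and bounded by $1$, so they are integrable with no moment assumption on $X$. Their mean is $\E\cos(u_mX)=\Rea f_X(u_m)$, as in the first identities of Proposition~\ref{prop:trig}; that step only uses the definition of $f_X$, not $\E X^2<\infty$. Kolmogorov's strong law then gives
\[
  \Rea\widehat f_n(u_m)=\frac1n\sum_{j=1}^n\cos(u_mX_j)\to\Rea f_X(u_m)
\]
almost surely. The same argument applied to $\sin(u_mX_j)$ gives $\Ima\widehat f_n(u_m)\to\Ima f_X(u_m)$ almost surely. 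Since $|z|\le|\Rea z|+|\Ima z|$, it follows that $|\widehat f_n(u_m)-f_X(u_m)|\to0$ on an event $\Omega_m$ with $P(\Omega_m)=1$.

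\textbf{Step 2 (the whole grid).} Put $\Omega_U=\bigcap_{m=1}^M\Omega_m$. This is a finite intersection of probability-one events, so $P(\Omega_U)=1$. On $\Omega_U$, the maximum of finitely many sequences tending to zero also tends to zero, which proves the first claim.

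\textbf{Step 3 (the discrepancy).} Using the weights, bound
\[
  0\le D_{U,w}(\widehat f_n,f_X)\le\Big(\sum_{m=1}^M w_m\Big)\max_{m}|\widehat f_n(u_m)-f_X(u_m)|^2 .
\]
The right-hand side tends to $0$ on $\Omega_U$, so $D_{U,w}(\widehat f_n,f_X)\to0$ almost surely.

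None of these steps is a genuine obstacle. The only point to check is that the strong law is applied to the bounded variables $\cos(u_mX)$ and $\sin(u_mX)$ rather than to $X$ itself, so no moments of $X$ are needed. This is what makes the statement valid for Cauchy and stable laws. Finiteness of the grid is what lets us intersect only finitely many null-set exceptions; uniform convergence over the whole real line would need a Glivenko--Cantelli-type argument, and that is not claimed here.
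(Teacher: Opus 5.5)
Your proof is correct and follows the paper's own argument: the strong law applied to the bounded variables $\cos(u_mX_j)$ and $\sin(u_mX_j)$ at each grid point, then finiteness of the grid to pass to the maximum and to $D_{U,w}$. Your intersection of the probability-one events $\Omega_m$ and the bound $D_{U,w}\le(\sum_m w_m)\max_m|\widehat f_n(u_m)-f_X(u_m)|^2$ spell out steps the paper leaves implicit.
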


\begin{proof}
For a fixed $u_m$, the random variables $\cos(u_mX_j)$ and $\sin(u_mX_j)$ are
bounded.  The strong law of large numbers gives almost sure convergence of
their sample means to their expectations.  Thus
$\widehat f_n(u_m)\to f_X(u_m)$ almost surely for each fixed $m$.  Since the
grid is finite, the maximum over $m=1,\ldots,M$ also converges to zero almost
surely.  The distance $D_{U,w}$ is a finite weighted sum of squared moduli, so
the final assertion follows.
\end{proof}

\begin{theorem}[Shared characteristic-function representation]\label{thm:compat}
Both regimes are finite estimating systems whose coefficients are
characteristic-function evaluations on a finite frequency set.  In the
finite-variance trigonometric regime, the Kunchenko normal system $FK=B$ has,
by Proposition~\ref{prop:trig}, entries determined by the values of $f_X$ and
$f'_X$ at integer multiples of the base frequency $p$.  In the moment-free regime, the empirical
vector $(\widehat f_n(u_1),\ldots,\widehat f_n(u_M))$ defines, by
Propositions~\ref{prop:bdd} and~\ref{prop:consist}, a bounded discrepancy from
reference CF vectors that is well defined and almost surely consistent without
moment assumptions.  Consequently both constructions reduce to finite systems
of equations in the same set of characteristic-function evaluations.
\end{theorem}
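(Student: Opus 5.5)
The theorem is essentially a synthesis statement, so the plan is to verify each of its three assertions by direct appeal to the earlier propositions, making explicit the finite frequency set involved in each regime.

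\emph{Trigonometric regime.} Fix a base frequency $p>0$ and a finite basis of order $S$, cosine or sine, indexed by $r,k\in\{1,\ldots,S\}$. By Proposition~\ref{prop:trig}, every entry $F^{(c)}_{rk}$ or $F^{(s)}_{rk}$ is a polynomial in real and imaginary parts of $f_X$ evaluated at $(r-k)p$, $(r+k)p$, $rp$ and $kp$. Every entry $B^{(c)}_r$ or $B^{(s)}_r$ is affine in $f'_X(rp)$ and $f_X(rp)$, with the scalar $\E X$ as coefficient. I will record that $\E X = -\ii f'_X(0)$. This places $\E X$ in the same evaluation family at the frequency $0=0\cdot p$.

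The frequency set involved is then
\[
  P_S=\{mp:\ m\in\mathbb{Z},\ |m|\le 2S\},
\]
which is finite. Using $f_X(-u)=\overline{f_X(u)}$, negative frequencies reduce to nonnegative ones. Hence $FK=B$ is a finite linear system whose coefficients are determined by the finite vector $\{f_X(u),f'_X(u): u\in P_S\}$. The intercept $K_0$ is handled identically.

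\emph{Moment-free regime.} Here I will simply invoke two earlier results. Proposition~\ref{prop:bdd} gives boundedness of $D_{U,w}(\widehat f_n,g)$ for every sample and every reference $g$, with no moment hypothesis. Proposition~\ref{prop:consist} gives almost sure convergence $D_{U,w}(\widehat f_n,f_X)\to0$. Because the reference enters only through the vector $(g(u_1),\ldots,g(u_M))$, the discrepancy is a function of finitely many characteristic-function evaluations. The associated stationarity conditions in a parametric family $g=g_\theta$ are therefore a finite system of equations in these evaluations, namely
\[
  \sum_m w_m\Rea\{\overline{(\widehat f_n(u_m)-g_\theta(u_m))}\,\partial_\theta g_\theta(u_m)\}=0 .
\]

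\emph{Combining the two.} Choosing $U\supseteq P_S$, for instance $U=P_S$ itself, makes both constructions depend on the same set of evaluations. In the trigonometric regime these are $f_X$ and, where first moments exist, $f'_X$; in the moment-free regime they are the empirical $\widehat f_n$ and the model $g$. This yields the concluding sentence of the theorem.

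The main obstacle is interpretive rather than technical. I must make precise what "the same set of characteristic-function evaluations" means, since the trigonometric system additionally uses $f'_X$. I will state that the shared object is the grid $P_S$. In the trigonometric regime, the derivative values at $P_S$ are auxiliary data that exist only under finite moments. I will not claim that the two systems optimize the same functional, consistent with the remark in the introduction.
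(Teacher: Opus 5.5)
Your proposal is correct and follows essentially the paper's route. You cite Proposition~\ref{prop:trig} for the normal system $FK=B$ and Propositions~\ref{prop:bdd} and~\ref{prop:consist} for the discrepancy, and you also make explicit what the paper's short proof leaves implicit: the finite frequency set $\{mp:|m|\le 2S\}$, and the identity $\E X=-\ii f'_X(0)$ that brings the $\E X$ terms in $B$ and $K_0$ into the characteristic-function family.

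The one point to soften is your stationarity equation. It presupposes differentiability of $\theta\mapsto g_\theta(u_m)$ and an interior minimizer, neither of which the theorem assumes. The paper's moment-free ``equations'' are just the CF-matching conditions $\widehat f_n(u_m)=g(u_m)$, $m=1,\ldots,M$, which need no such hypothesis.
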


\begin{proof}
The finite-variance statement is Proposition~\ref{prop:trig}: the
orthogonality conditions defining $FK=B$ are linear moment conditions whose
coefficients are the CF and CF-derivative evaluations listed there.  The
moment-free statement combines Propositions~\ref{prop:bdd}
and~\ref{prop:consist}.  Both systems equate functionals of a finite set of
characteristic-function evaluations to target values, which is the asserted
common form.
\end{proof}

\begin{remark}
The representation is structural, not an equivalence.  The two regimes use the
same characteristic-function evaluations but different moment functions:
projection-residual moments for the covariance normal system, and CF-matching
moments for the discrepancy.  They therefore optimize different functionals,
and Theorem~\ref{thm:compat} unifies their algebraic form without reducing one
criterion to the other or asserting that either dominates.
\end{remark}

\section{Bounded CF scores}

The empirical CF geometry motivates bounded estimating equations.  For residuals
$r_j(\theta)$ and a frequency $u$, define
\[
  \Psi_n(\theta;u)=\frac{1}{n}\sum_{j=1}^n \sin(u r_j(\theta)).
\]
For a finite set of frequencies $u_1,\ldots,u_L$, a weak CF estimator may be
defined by solving $\Psi_n(\theta;u_\ell)=0$ or by minimizing
\[
  \widehat\theta
  =\argmin_\theta \sum_{\ell=1}^L a_\ell\Psi_n(\theta;u_\ell)^2,
  \qquad a_\ell>0.
\]
This paper uses this construction only as a bounded stochastic-polynomial
score, not as a complete estimator theory.

\begin{proposition}[Bounded odd score]\label{prop:score}
For all $u,r\in\R$,
\[
  |\sin(ur)|\le 1,\qquad \sin(u(-r))=-\sin(ur).
\]
Moreover, near the origin,
\[
  \sin(ur)=ur-\frac{(ur)^3}{3!}+\frac{(ur)^5}{5!}-\cdots,
\]
so the bounded score is linked to the odd polynomial hierarchy.
\end{proposition}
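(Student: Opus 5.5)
The plan has three short steps, all elementary and all resting on the real-analytic sine function. First, for the bound $|\sin(ur)|\le 1$, I will write $\sin x=\Ima e^{\ii x}$ with $x=ur\in\R$. Then $|\sin x|\le |e^{\ii x}|=1$, which is the same modulus-one observation already used in the proof of Proposition~\ref{prop:bdd}. Equivalently, one can cite $\sin^2x+\cos^2x=1$ together with $\cos^2x\ge 0$.

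Second, oddness. I will use the identity $\sin(-x)=-\sin x$, which follows from the definition $\sin x=(e^{\ii x}-e^{-\ii x})/(2\ii)$: replacing $x$ by $-x$ swaps the two exponentials and flips the sign. Taking $x=ur$ and noting $u(-r)=-(ur)$ gives $\sin(u(-r))=-\sin(ur)$.

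Third, the expansion. I will substitute $x=ur$ into the Maclaurin series $\sin x=\sum_{k\ge0}(-1)^k x^{2k+1}/(2k+1)!$. This series converges absolutely for every real $x$ by the ratio test, since its coefficients decay factorially. The identity therefore holds for all $u,r$, not only near the origin; the phrase ``near the origin'' only indicates where truncation to the first few terms is accurate. Truncating after the odd term of degree $2K-1$ leaves a Lagrange remainder bounded by $|ur|^{2K+1}/(2K+1)!$. Only odd powers of $r$ appear, which is what connects $\Psi_n(\theta;u)$ to odd-degree stochastic-polynomial scores in the residuals.

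None of these steps poses a real obstacle. The only point needing care is how to read the final clause, that the bounded score is ``linked to the odd polynomial hierarchy.'' It is an interpretive remark rather than a mathematical claim. I would make it precise by stating the truncation bound: $\sin(ur)$ agrees with the odd polynomial $\sum_{k<K}(-1)^k(ur)^{2k+1}/(2k+1)!$ up to an error of order $|ur|^{2K+1}$.
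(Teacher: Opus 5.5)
Your proposal is correct and takes the same approach as the paper, which simply cites the elementary trigonometric identities for the bound and for oddness, and the standard Taylor expansion of sine for the series. Your additions are accurate refinements of that argument rather than a different route: the series converges for all real $ur$, and truncation leaves a Lagrange remainder bounded by $|ur|^{2K+1}/(2K+1)!$.
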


\begin{proof}
The bound and oddness are elementary trigonometric identities.  The series is
the standard Taylor expansion of sine.
\end{proof}

\begin{proposition}[Cauchy bridge]\label{prop:cauchy}
Let $\gamma\ne 0$ and
\[
  w_\gamma(r)=\frac{1}{1+(r/\gamma)^2}.
\]
Then
\[
  w_\gamma(r)r=\gamma^2\frac{r}{\gamma^2+r^2}.
\]
Thus the degree-one windowed score is proportional to the known-scale Cauchy
location score $2r/(\gamma^2+r^2)$.
\end{proposition}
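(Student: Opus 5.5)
The plan is a direct algebraic verification, since Proposition~\ref{prop:cauchy} is a deterministic identity in the real variable $r$ for fixed $\gamma\ne 0$. First I will check that everything is well defined. Because $\gamma\ne 0$, the quantity $(r/\gamma)^2$ is a finite nonnegative real number, so $1+(r/\gamma)^2\ge 1>0$. Likewise $\gamma^2+r^2\ge\gamma^2>0$. Hence both sides of the claimed identity are defined for every $r\in\R$, and no case split in $r$ is needed.

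The main step is to clear the inner fraction. Multiplying numerator and denominator of $w_\gamma(r)$ by $\gamma^2\neq 0$ gives
\[
  w_\gamma(r)=\frac{\gamma^2}{\gamma^2\{1+(r/\gamma)^2\}}=\frac{\gamma^2}{\gamma^2+r^2}.
\]
Multiplying this by $r$ yields exactly $w_\gamma(r)r=\gamma^2 r/(\gamma^2+r^2)$. Equivalently, one may cross-multiply and check the polynomial identity $r(\gamma^2+r^2)=\gamma^2 r\{1+(r/\gamma)^2\}$ after expanding the right-hand side. That second form is the one I would hand to Lean's \texttt{field\_simp}/\texttt{ring}, with the side conditions $\gamma\ne0$ and $\gamma^2+r^2\ne0$ supplied.

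For the interpretive sentence, I will compare this with the Cauchy location score. The Cauchy density with location $\theta$ and known scale $\gamma$ has log-density $-\log\{\pi\gamma(1+((x-\theta)/\gamma)^2)\}$. Differentiating in $\theta$ and writing $r=x-\theta$ gives the score $2r/(\gamma^2+r^2)$. The windowed score is therefore $(\gamma^2/2)$ times the Cauchy score. The proportionality constant $\gamma^2/2$ is nonzero and does not depend on $r$, so the two estimating equations have the same zeros.

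There is no genuine obstacle here. The only point requiring care is recording the nonvanishing conditions $\gamma\ne0$ and $\gamma^2+r^2>0$, which justify the division and make the constant factor nonzero.
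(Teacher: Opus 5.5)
Your proof is correct and takes the same route as the paper: multiply the numerator and denominator of $w_\gamma(r)$ by $\gamma^2\ne0$ to get $\gamma^2 r/(\gamma^2+r^2)$. Your added well-definedness check ($\gamma^2+r^2>0$) and your derivation of the Cauchy location score, which gives the explicit nonzero constant $\gamma^2/2$, correctly fill in details the paper leaves implicit.
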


\begin{proof}
The identity follows by multiplying numerator and denominator by $\gamma^2$:
\[
  \frac{r}{1+r^2/\gamma^2}
  =\frac{\gamma^2 r}{\gamma^2+r^2}.
\]
\end{proof}

\section{Minimum-CF-distance estimation and its asymptotics}
\label{sec:est}

The bounded discrepancy of Definition~\ref{def:cfdist} yields a parameter
estimator that needs no moments.  Let $\{f_\theta:\theta\in\Theta\}$, with
$\Theta\subset\R^d$, be a parametric family of characteristic functions.

\begin{definition}[Minimum-CF-distance estimator]\label{def:cfpmm}
For a fixed grid $U$ and positive weights $w$,
\[
  \widehat\theta_n
   =\argmin_{\theta\in\Theta} D_{U,w}(\widehat f_n,f_\theta)
   =\argmin_{\theta\in\Theta}\sum_{m=1}^M w_m
      |\widehat f_n(u_m)-f_\theta(u_m)|^2 .
\]
\end{definition}

This is the characteristic-function counterpart of the Kunchenko
stochastic-polynomial fit: the empirical CF vector
$(\widehat f_n(u_1),\ldots,\widehat f_n(u_M))$ plays the role of the sample
moments, and the weighted squared discrepancy replaces the covariance
criterion.  By Proposition~\ref{prop:bdd} the objective is bounded for every
law, with no moment assumption.  Write
\[
  Q_n(\theta)=D_{U,w}(\widehat f_n,f_\theta),\qquad
  Q(\theta)=\sum_{m=1}^M w_m|f_{\theta_0}(u_m)-f_\theta(u_m)|^2,
\]
where $\theta_0$ denotes the true parameter.

\begin{definition}[Grid identifiability]\label{def:ident}
The family $\{f_\theta\}$ is identifiable at $\theta_0$ on $U$ if
$f_\theta(u_m)=f_{\theta_0}(u_m)$ for all $m$ implies $\theta=\theta_0$;
equivalently, $Q(\theta)=0$ only at $\theta=\theta_0$.  (Injectivity of
$\theta\mapsto(f_\theta(u_1),\ldots,f_\theta(u_M))$ on $\Theta$ is a stronger
sufficient condition.)
\end{definition}

\begin{remark}
Grid identifiability is easy to verify for the standard heavy-tailed families.
For the Cauchy location--scale family $f_\theta(u)=\exp(\ii\delta u-\gamma|u|)$,
any two distinct frequencies $u_1\ne u_2$ with $u_m>0$ determine
$(\delta,\gamma)$, since $\log|f_\theta(u)|=-\gamma|u|$ fixes $\gamma$ and
$\arg f_\theta(u)=\delta u$ fixes $\delta$.  For the symmetric stable family
$f_\theta(u)=\exp(-(\gamma|u|)^\alpha)$, two distinct positive frequencies
determine $(\alpha,\gamma)$, because
$\log(-\log f_\theta(u))=\alpha\log\gamma+\alpha\log|u|$ is affine in
$\log|u|$ with slope $\alpha$.  A single frequency is not enough in general:
for the symmetric stable family the characteristic function is real, so one
frequency gives a single equation for $(\alpha,\gamma)$, while for the Cauchy
family one frequency fixes $\gamma$ but determines $\delta$ only modulo
$2\pi/u$, an ambiguity that a second frequency removes.
Identifiability is thus a joint property of the family and the grid.  The next
proposition records a general, checkable sufficient condition; it is the key
assumption of the consistency result that follows.
\end{remark}

\begin{proposition}[Grid identifiability from smoothness and rank]\label{prop:ident-gen}
Let
\[
  \Phi_U(\theta)=\big(\Rea f_\theta(u_1),\Ima f_\theta(u_1),\ldots,
   \Rea f_\theta(u_M),\Ima f_\theta(u_M)\big)
\]
map $\Theta\subset\R^d$ into $\R^{2M}$.
\textup{(i)} If $\Phi_U$ is continuously differentiable on a neighbourhood of
$\theta_0$ and its $2M\times d$ Jacobian $G(\theta_0)$ has full column rank $d$,
then $\theta_0$ is locally identified on $U$: $\Phi_U(\theta)=\Phi_U(\theta_0)$
for $\theta$ near $\theta_0$ forces $\theta=\theta_0$.
\textup{(ii)} If, in addition, $\Phi_U$ is injective on the compact set
$\Theta$, then the family is grid identifiable at $\theta_0$ on $U$ in the sense
of Definition~\ref{def:ident}.  A $d$-parameter family is therefore generically
identified by $\lceil d/2\rceil$ distinct nonzero frequencies, each contributing
the two coordinates $\Rea f_\theta(u_m)$ and $\Ima f_\theta(u_m)$.
\end{proposition}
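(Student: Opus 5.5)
Part (i) is a local injectivity statement for a $C^1$ map with injective derivative, so I will prove it directly from a first-order Taylor expansion rather than through the inverse function theorem (which needs $2M=d$). Write $G=G(\theta_0)$. Full column rank $d$ means $G^TG$ is positive definite, so there is a constant $c>0$ with $|Gh|\ge 2c|h|$ for all $h\in\R^d$; for instance $2c$ can be taken as the square root of the smallest eigenvalue of $G^TG$.

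By continuity of the Jacobian on a neighbourhood of $\theta_0$, choose a convex ball $B$ around $\theta_0$, contained in $\Theta$ where the derivative exists, on which $\|G(\theta)-G\|\le c$. For $\theta\in B$ the mean-value inequality in integral form gives
\[
  \Phi_U(\theta)-\Phi_U(\theta_0)
  = G(\theta-\theta_0)+\int_0^1\{G(\theta_0+t(\theta-\theta_0))-G\}(\theta-\theta_0)\,dt .
\]
Hence
\[
  |\Phi_U(\theta)-\Phi_U(\theta_0)|\ge 2c|\theta-\theta_0|-c|\theta-\theta_0|=c|\theta-\theta_0| .
\]
So $\Phi_U(\theta)=\Phi_U(\theta_0)$ with $\theta\in B$ forces $\theta=\theta_0$. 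The one care point is to apply the integral form componentwise on the vector-valued map, since the mean value theorem with a single intermediate point is not available for $\R^{2M}$-valued maps.

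For (ii), injectivity of $\Phi_U$ on $\Theta$ gives the conclusion immediately. The equality $f_\theta(u_m)=f_{\theta_0}(u_m)$ for all $m$ is exactly $\Phi_U(\theta)=\Phi_U(\theta_0)$, because a complex number is determined by its real and imaginary parts. Injectivity then yields $\theta=\theta_0$, which is Definition~\ref{def:ident}; equivalently $Q(\theta)=|\Phi_U(\theta)-\Phi_U(\theta_0)|_w^2$ vanishes only at $\theta_0$, where $|\cdot|_w$ is the Euclidean norm with each coordinate pair weighted by $w_m>0$. Compactness is not needed for this implication. I would note in a remark that it serves later, when the local rank condition and global injectivity are combined into a uniform separation $\inf_{|\theta-\theta_0|\ge\eta}Q(\theta)>0$ via continuity of $Q$ on the compact set $\Theta$.

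The final sentence on $\lceil d/2\rceil$ frequencies is a counting heuristic, not a theorem, and I would prove only its necessary half. Full column rank requires $2M\ge d$, that is $M\ge\lceil d/2\rceil$, since $G$ has $2M$ rows. The word ``generically'' means that for a family whose real and imaginary parts respond independently to the parameters, this bound is attained. I would justify that with the Cauchy and stable examples of the preceding remark, while making explicit that real-valued (symmetric) families contribute only one useful coordinate per frequency and so need $M\ge d$. The main obstacle is keeping this last claim honest, since it cannot be proved in full generality. I would therefore phrase it as a consequence of the rank condition, valid whenever the $2M$ coordinates are functionally independent in $\theta$.
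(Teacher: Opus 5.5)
Your argument is correct. For part (ii) and the closing counting claim it matches the paper. The paper also reads global injectivity of $\Phi_U$ directly as the sufficient condition of Definition~\ref{def:ident}, and, as you note, compactness plays no role there. It too proves only the necessary half $2M\ge d$ of the $\lceil d/2\rceil$ statement.

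The real difference is part (i). The paper handles it in one line: full column rank makes $\Phi_U$ an immersion at $\theta_0$, and the rank theorem gives local injectivity. You prove it by hand, combining the smallest singular value of $G(\theta_0)$ with the integral form of the mean-value theorem on a convex ball where $\|G(\theta)-G(\theta_0)\|\le c$.

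Your route is elementary and self-contained, and it gives a quantitative conclusion the paper does not state. You get $|\Phi_U(\theta)-\Phi_U(\theta_0)|\ge c|\theta-\theta_0|$ on $B$, hence $Q(\theta)\ge(\min_m w_m)\,c^2|\theta-\theta_0|^2$ near $\theta_0$, a local quadratic separation of the criterion. The same estimate, comparing the Jacobian along the segment to $G(\theta_0)$, gives $|\Phi_U(\theta)-\Phi_U(\theta')|\ge c|\theta-\theta'|$ for all $\theta,\theta'\in B$. So you recover the full local injectivity the rank theorem supplies, in fact with a bi-Lipschitz bound. The paper's route is shorter and places the statement in its standard differential-geometric setting.

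Your caveat that a real-valued characteristic function contributes only one nontrivial coordinate per frequency is also accurate. It is sharper than the paper's sentence after the proposition, which lists the symmetric stable family as an instance of $\lceil d/2\rceil=1$. With a single frequency the stable Jacobian has rank at most one, so the rank condition itself fails there and $M\ge d=2$ is needed.
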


\begin{proof}
\textup{(i)} Full column rank of the continuous Jacobian $G(\theta_0)$ makes
$\Phi_U$ an immersion at $\theta_0$, hence locally injective by the rank theorem.
\textup{(ii)} Global injectivity of $\Phi_U$ on $\Theta$ is exactly the stronger
sufficient condition of Definition~\ref{def:ident}, so $Q(\theta)=0$ only at
$\theta_0$.  An immersion requires $2M\ge d$, i.e. $M\ge\lceil d/2\rceil$
\cite{rothenberg1971}.
\end{proof}

The rank condition in part \textup{(i)} is the same full-column-rank requirement
imposed on $G$ in Proposition~\ref{prop:an-est}, so local identifiability is
available wherever the asymptotic-normality result applies; the Cauchy and
stable computations of the preceding remark are the instances $d=2$,
$\lceil d/2\rceil=1$, where one frequency leaves a residual ambiguity that a
second distinct frequency removes.

\begin{proposition}[Consistency on a fixed grid]\label{prop:cons-est}
Let $X_1,X_2,\ldots$ be i.i.d.\ with characteristic function $f_{\theta_0}$,
and assume \textup{(i)} $\Theta$ is compact; \textup{(ii)} for each $m$ the map
$\theta\mapsto f_\theta(u_m)$ is continuous on $\Theta$; \textup{(iii)} the
family is identifiable at $\theta_0$ on $U$.  Then every minimizer $\widehat\theta_n$ of
Definition~\ref{def:cfpmm} satisfies $\widehat\theta_n\to\theta_0$ almost
surely.  No moment assumption is used.
\end{proposition}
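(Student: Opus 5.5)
The plan is the standard argmin-consistency argument (uniform convergence of the criterion plus a well-separated minimum). Here uniform convergence is almost free, because the only random objects are the finitely many numbers $\widehat f_n(u_m)$.

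\textbf{Step 1 (uniform convergence).} For every $\theta\in\Theta$, expand $|a-c|^2-|b-c|^2=(a-b)\overline{(a+b-2c)}$ up to real parts. With $a=\widehat f_n(u_m)$, $b=f_{\theta_0}(u_m)$ and $c=f_\theta(u_m)$, all of modulus at most one, this gives
\[
  \bigl||\widehat f_n(u_m)-f_\theta(u_m)|^2-|f_{\theta_0}(u_m)-f_\theta(u_m)|^2\bigr|
  \le 4\,|\widehat f_n(u_m)-f_{\theta_0}(u_m)|.
\]
Multiplying by $w_m$ and summing yields
\[
  \sup_{\theta\in\Theta}|Q_n(\theta)-Q(\theta)|\le 4\Bigl(\sum_m w_m\Bigr)\max_m|\widehat f_n(u_m)-f_{\theta_0}(u_m)|.
\]
The right-hand side tends to $0$ almost surely by Proposition~\ref{prop:consist}. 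This is the only probabilistic input, and it uses only boundedness, so no moments are needed. Fix an event $\Omega_0$ of probability one on which this convergence holds; from here on the argument is deterministic on $\Omega_0$.

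\textbf{Step 2 (well-separated minimum).} By (ii), $Q$ is continuous on $\Theta$. We have $Q\ge 0$ and $Q(\theta_0)=0$, and by (iii) $Q(\theta)>0$ for $\theta\ne\theta_0$. For $\eta>0$ the set $K_\eta=\{\theta\in\Theta:|\theta-\theta_0|\ge\eta\}$ is compact by (i). If it is nonempty, $Q$ attains a minimum on it, so $c_\eta=\min_{K_\eta}Q>0$.

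\textbf{Step 3 (conclusion).} Let $\widehat\theta_n$ be any minimizer of $Q_n$. Then $Q_n(\widehat\theta_n)\le Q_n(\theta_0)$, and therefore
\[
  Q(\widehat\theta_n)\le Q_n(\widehat\theta_n)+\delta_n\le Q_n(\theta_0)+\delta_n\le Q(\theta_0)+2\delta_n=2\delta_n,
\]
where $\delta_n=\sup_\Theta|Q_n-Q|$. On $\Omega_0$ we eventually have $2\delta_n<c_\eta$, which forces $\widehat\theta_n\notin K_\eta$, i.e.\ $|\widehat\theta_n-\theta_0|<\eta$. Since $\eta>0$ is arbitrary, $\widehat\theta_n\to\theta_0$ on $\Omega_0$, hence almost surely. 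The argument applies to every measurable selection of minimizers, and indeed to any choice at all, because it holds pointwise on $\Omega_0$.

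\textbf{Main point of care.} I expect Step 2 to be the only delicate step, specifically the positivity of $c_\eta$. This is exactly where compactness and the global identifiability (iii) combine; the local version from Proposition~\ref{prop:ident-gen}(i) would not suffice. One also needs existence of a minimizer: $Q_n$ is continuous on the compact set $\Theta$ for each sample, so a minimizer exists. Should measurability be a concern, we phrase the statement as pointwise convergence on $\Omega_0$ for every minimizer, which avoids it.
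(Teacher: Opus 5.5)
Your proposal is correct and follows the paper's proof essentially step for step. It uses the same bound $\bigl||a-c|^2-|b-c|^2\bigr|\le 4|a-b|$ to get uniform almost sure convergence of $Q_n$ to $Q$, and the same well-separated-minimum argument from compactness, continuity and identifiability. The one difference is that you carry out the final argmin step pathwise on a probability-one event, where the paper cites van der Vaart's Theorem~5.7; since that theorem is stated for convergence in probability, your direct argument is what actually makes the almost sure conclusion explicit.
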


\begin{proof}
For complex numbers $a,b,c$ of modulus at most one,
\[
  \big||a-c|^2-|b-c|^2\big|
   =\big||a-c|-|b-c|\big|\,\big(|a-c|+|b-c|\big)\le 4|a-b|,
\]
by the reverse triangle inequality and $|a-c|+|b-c|\le4$.  Applying this with
$a=\widehat f_n(u_m)$, $b=f_{\theta_0}(u_m)$ and $c=f_\theta(u_m)$ gives,
uniformly in $\theta$,
\[
  \sup_{\theta\in\Theta}|Q_n(\theta)-Q(\theta)|
   \le 4\sum_{m=1}^M w_m\,|\widehat f_n(u_m)-f_{\theta_0}(u_m)|,
\]
which tends to zero almost surely by Proposition~\ref{prop:consist}.  The limit
$Q$ is continuous and nonnegative, and by identifiability $Q(\theta)>0$ for
$\theta\ne\theta_0$ while $Q(\theta_0)=0$.  Since $Q$ is continuous and the set
$\{\theta\in\Theta:\|\theta-\theta_0\|\ge\varepsilon\}$ is compact,
$\inf_{\|\theta-\theta_0\|\ge\varepsilon}Q(\theta)>0$ for every $\varepsilon>0$,
so the minimizer is well separated.  Uniform almost sure convergence of $Q_n$
to $Q$ over the compact set $\Theta$ then gives $\widehat\theta_n\to\theta_0$
almost surely \cite[Theorem~5.7]{vandervaart2000}.
\end{proof}

The fixed-grid limit law below uses one regularity hypothesis on the
\emph{model}, stated once and invoked by the asymptotic-normality and efficiency
results.  It governs the parameter dependence $\theta\mapsto P_\theta$ and, as
the following lemma records, lets bounded characteristic-function moments be
differentiated under the expectation; it is not a condition on the tails of $X$.

\begin{assumption}[Regularity: differentiability in quadratic mean]\label{asm:dqm}
The model $\{P_\theta:\theta\in\Theta\}$ has densities $p_\theta$ with respect to
a $\sigma$-finite measure $\mu$ and is differentiable in quadratic mean at the
true parameter $\theta_0$ \cite[Def.~7.1]{vandervaart2000}: there is a score
$s=s_{\theta_0}\in L^2(P_{\theta_0})^d$ with
\[
  \int\Big(\sqrt{p_{\theta_0+h}}-\sqrt{p_{\theta_0}}
    -\tfrac12 h^\top s\,\sqrt{p_{\theta_0}}\Big)^2 d\mu=o(\|h\|^2),
  \qquad h\to0,
\]
and the Fisher information $I(\theta_0)=\E_{\theta_0}[s\,s^\top]$ is finite and
nonsingular.
\end{assumption}

Assumption~\ref{asm:dqm} restricts only the smoothness of
$\theta\mapsto p_\theta$ in the parameter and imposes no moment condition on $X$;
in particular $I(\theta_0)$ is finite for the location--scale Cauchy and the
symmetric stable families even though $X$ has no variance.  It is verifiable
through the standard sufficient condition \cite[Lemma~7.6]{vandervaart2000}: it
holds whenever $\theta\mapsto\sqrt{p_\theta(x)}$ is continuously differentiable
for $\mu$-almost every $x$ and the map $\theta\mapsto I(\theta)$ is well defined,
finite, and continuous, both direct to check for the smooth heavy-tailed
families considered here.

\begin{lemma}[Differentiation of bounded test functions]\label{lem:btf}
Under Assumption~\ref{asm:dqm}, for every bounded measurable $\psi:\R\to\R$ the
map $\theta\mapsto\E_\theta\psi(X)$ is differentiable at $\theta_0$ with
\[
  \partial_\theta\,\E_\theta\psi(X)\big|_{\theta_0}=\E_{\theta_0}[\psi(X)\,s].
\]
In particular $\E_{\theta_0}s=0$ \textup{(}take $\psi\equiv1$\textup{)}, so for
the bounded functions $\cos(u\cdot),\sin(u\cdot)$ the interchange of derivative
and expectation needed below is justified.
\end{lemma}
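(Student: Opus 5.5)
The approach is the standard DQM argument (cf.\ \cite[Thm.~7.2 and its proof]{vandervaart2000}). We write the increment of the expectation as an integral against $p_{\theta_0+h}-p_{\theta_0}$ and factor the density difference through square roots. Fix $\psi$ bounded by $\|\psi\|_\infty$ and put $\xi_h=\sqrt{p_{\theta_0+h}}$ and $\xi_0=\sqrt{p_{\theta_0}}$. Then
\[
  \E_{\theta_0+h}\psi-\E_{\theta_0}\psi
  =\int\psi\,(\xi_h-\xi_0)(\xi_h+\xi_0)\,d\mu .
\]
We write $\xi_h-\xi_0=\tfrac12 h^\top s\,\xi_0+r_h$, where $\|r_h\|_{L^2(\mu)}=o(\|h\|)$ by Assumption~\ref{asm:dqm}. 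We also write $\xi_h+\xi_0=2\xi_0+(\xi_h-\xi_0)$. Expanding the product gives four terms:
\[
  \int\psi\,h^\top s\,\xi_0^2\,d\mu
  +2\int\psi\,r_h\xi_0\,d\mu
  +\int\psi\,(\xi_h-\xi_0)^2\,d\mu ,
\]
where the last term collects both pieces that involve $(\xi_h-\xi_0)$ twice. The first term is exactly $h^\top\E_{\theta_0}[\psi(X)s]$. It is finite because $\psi$ is bounded and $s\in L^2(P_{\theta_0})\subset L^1(P_{\theta_0})$.

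Each remaining term is $o(\|h\|)$, by Cauchy--Schwarz and boundedness of $\psi$. For the second, $|\int\psi r_h\xi_0\,d\mu|\le\|\psi\|_\infty\|r_h\|_{L^2}\|\xi_0\|_{L^2}=\|\psi\|_\infty\, o(\|h\|)$, since $\int\xi_0^2\,d\mu=1$. For the third, $|\int\psi(\xi_h-\xi_0)^2\,d\mu|\le\|\psi\|_\infty\|\xi_h-\xi_0\|_{L^2}^2$. By the triangle inequality,
\[
  \|\xi_h-\xi_0\|_{L^2}\le\tfrac12\|h^\top s\,\xi_0\|_{L^2}+\|r_h\|_{L^2}
  \le\tfrac12\|h\|\,\big(\E_{\theta_0}\|s\|^2\big)^{1/2}+o(\|h\|)=O(\|h\|),
\]
so its square is $O(\|h\|^2)=o(\|h\|)$. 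Here we use that $\operatorname{tr}I(\theta_0)=\E_{\theta_0}\|s\|^2<\infty$. Hence
\[
  \E_{\theta_0+h}\psi-\E_{\theta_0}\psi=h^\top\E_{\theta_0}[\psi s]+o(\|h\|),
\]
which is Fréchet differentiability at $\theta_0$ with the stated gradient.

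For the corollaries, take $\psi\equiv1$. The left side is then identically $0$, because every $p_\theta$ integrates to one. Uniqueness of the derivative therefore forces $\E_{\theta_0}s=0$. Finally, we apply the result to $\psi=\cos(u\,\cdot)$ and $\psi=\sin(u\,\cdot)$, which are bounded by $1$. This yields
\[
  \partial_\theta\Rea f_\theta(u)=\E_{\theta_0}[\cos(uX)s],\qquad
  \partial_\theta\Ima f_\theta(u)=\E_{\theta_0}[\sin(uX)s]
\]
at $\theta_0$, with no moment condition on $X$.

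The only delicate point is the quadratic term $\int\psi(\xi_h-\xi_0)^2\,d\mu$. One must not try to control it using a dominating function for $p_\theta$. Instead we use only the $L^2(\mu)$ bound $\|\xi_h-\xi_0\|_{L^2}=O(\|h\|)$, which follows from DQM together with finiteness of $I(\theta_0)$. Everything else is Cauchy--Schwarz with $\int\xi_0^2\,d\mu=1$.
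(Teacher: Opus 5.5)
Your proposal is correct and follows the paper's own argument: you write the DQM expansion $\sqrt{p_{\theta_0+h}}=\sqrt{p_{\theta_0}}\,(1+\tfrac12 h^\top s)+r_h$, expand $\int\psi\,(\sqrt{p_{\theta_0+h}})^2\,d\mu$, identify the linear term as $h^\top\E_{\theta_0}[\psi s]$, and make the remainders $o(\|h\|)$ by Cauchy--Schwarz with $\|\sqrt{p_{\theta_0}}\|_{L^2(\mu)}=1$. Grouping the quadratic pieces into $\int\psi(\xi_h-\xi_0)^2\,d\mu$ is only a bookkeeping variant of the paper's term-by-term bound, and your $\psi\equiv1$ argument for $\E_{\theta_0}s=0$ is the same as the paper's.
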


\begin{proof}
Differentiability in quadratic mean is the $L^2(\mu)$ statement
$\sqrt{p_{\theta_0+h}}=\sqrt{p_{\theta_0}}\,(1+\tfrac12 h^\top s)+r_h$ with
$\|r_h\|_{L^2(\mu)}=o(\|h\|)$.  For bounded $\psi$ expand
$\E_{\theta_0+h}\psi=\int\psi\,(\sqrt{p_{\theta_0+h}})^2 d\mu$: the constant term
is $\E_{\theta_0}\psi$, the linear term is
$h^\top\!\int\psi\,s\,p_{\theta_0}\,d\mu=h^\top\E_{\theta_0}[\psi s]$, and every
remaining term is bounded by $\|\psi\|_\infty$ times a multiple of
$\|r_h\|_{L^2(\mu)}$ or of $\|h\|^2$, hence $o(\|h\|)$, by Cauchy--Schwarz and
$\|\sqrt{p_{\theta_0}}\|_{L^2(\mu)}=1$.  Thus
$\E_{\theta_0+h}\psi=\E_{\theta_0}\psi+h^\top\E_{\theta_0}[\psi s]+o(\|h\|)$.
Taking $\psi\equiv1$ gives $\E_{\theta_0}s=0$ from
$\partial_\theta\!\int p_\theta\,d\mu=0$.
\end{proof}

\begin{proposition}[Asymptotic normality and optimal weighting]\label{prop:an-est}
Assume in addition that $\theta_0$ is interior to $\Theta$, that
$\theta\mapsto f_\theta(u_m)$ is continuously differentiable near $\theta_0$,
and that every $u_m\ne0$ and $X$ is nondegenerate.  Let $\widehat m_n$ and
$m(\theta)$ stack the parts
$(\Rea\widehat f_n(u_m),\Ima\widehat f_n(u_m))_{m=1}^M$ and
$(\Rea f_\theta(u_m),\Ima f_\theta(u_m))_{m=1}^M$, let
$G=\partial m/\partial\theta^\top$ at $\theta_0$ be the $2M\times d$ Jacobian of
full column rank $d$, and let $\Omega$ be the asymptotic covariance of
$\sqrt n\,\widehat m_n$, assumed nonsingular
(Proposition~\ref{prop:omega} below gives a verifiable sufficient condition).
Under Assumption~\ref{asm:dqm} the differentiability assumed here holds and the
Jacobian is $G=\Cov(m,s)$ by Lemma~\ref{lem:btf}, while $\Omega=\Cov(m)$ is the
covariance of the bounded moment vector.  For any symmetric
positive-definite weight matrix $W$ (the present estimator uses the diagonal
$W=\operatorname{diag}(w_m,w_m)$),
\[
  \sqrt n\,(\widehat\theta_n-\theta_0)\xrightarrow{d}
   \mathcal N\!\Big(0,\,(G^\top WG)^{-1}G^\top W\Omega WG\,(G^\top WG)^{-1}\Big).
\]
The entries of $\Omega$ are covariances of the bounded variables $\cos(u_mX)$
and $\sin(u_mX)$, hence finite for every law, in particular for Cauchy and
symmetric stable laws.  The weight $W=\Omega^{-1}$ minimizes the asymptotic
covariance, which then equals $(G^\top\Omega^{-1}G)^{-1}$.
\end{proposition}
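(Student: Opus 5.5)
The argument is the standard minimum-distance/GMM one, specialised to the bounded moment vector. I would proceed in four steps: a CLT for $\widehat m_n$, a first-order expansion of the objective, a linearisation that yields the sandwich covariance, and a matrix inequality for the optimal weight.

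\textbf{Step 1 (CLT for the empirical moment vector).} Write $\widehat m_n=n^{-1}\sum_{j=1}^n \mu(X_j)$ with
\[
\mu(x)=(\cos(u_1x),\sin(u_1x),\ldots,\cos(u_Mx),\sin(u_Mx))^\top .
\]
The summands are i.i.d., bounded, and have mean $m(\theta_0)$. So the multivariate Lindeberg--L\'evy CLT gives $\sqrt n\,(\widehat m_n-m(\theta_0))\xrightarrow{d}\mathcal N(0,\Omega)$ with $\Omega=\Cov(\mu(X))$. Every entry of $\Omega$ is a covariance of variables bounded by one, so it is finite for every law; no tail condition on $X$ enters. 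Under Assumption~\ref{asm:dqm}, Lemma~\ref{lem:btf} applied coordinatewise to $\psi=\cos(u_m\cdot)$ and $\psi=\sin(u_m\cdot)$ gives differentiability of $m$ at $\theta_0$. Since $\E_{\theta_0}s=0$, the Jacobian is $G=\E[\mu s^\top]=\Cov(\mu,s)$.

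\textbf{Step 2 (rewrite the objective and get the first-order condition).} Note that $Q_n(\theta)=(\widehat m_n-m(\theta))^\top W(\widehat m_n-m(\theta))$ with $W=\operatorname{diag}(w_m,w_m)$, since $|a|^2=(\Rea a)^2+(\Ima a)^2$. I would carry out the argument for a general symmetric positive-definite $W$. By Proposition~\ref{prop:cons-est} (uniform convergence transfers verbatim to a general $W$, because the objective is still a continuous function of bounded quantities), $\widehat\theta_n\to\theta_0$ a.s. Since $\theta_0$ is interior, with probability tending to one $\widehat\theta_n$ is an interior minimizer, so the first-order condition holds:
\[
G(\widehat\theta_n)^\top W(\widehat m_n-m(\widehat\theta_n))=0 .
\]

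\textbf{Step 3 (linearise and read off the limit).} Use a mean-value expansion row by row:
\[
m(\widehat\theta_n)=m(\theta_0)+\bar G_n(\widehat\theta_n-\theta_0),
\]
where each row of $\bar G_n$ is evaluated at an intermediate point. Continuity of the Jacobian near $\theta_0$ together with consistency gives $\bar G_n\to G$ and $G(\widehat\theta_n)\to G$ in probability. Full column rank of $G$ makes $G^\top WG$ invertible, so
\[
\sqrt n(\widehat\theta_n-\theta_0)=(G^\top WG)^{-1}G^\top W\sqrt n(\widehat m_n-m(\theta_0))+o_p(1).
\]
Slutsky's lemma and Step~1 then give the sandwich covariance. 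The nondegeneracy hypotheses ($u_m\ne0$, $X$ nondegenerate) enter only through the assumed nonsingularity of $\Omega$, which is needed for the optimal-weight step.

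\textbf{Step 4 (optimal weight).} Put
\[
A=(G^\top WG)^{-1}G^\top W,\qquad B=(G^\top\Omega^{-1}G)^{-1}G^\top\Omega^{-1}.
\]
Then $AG=BG=I$. A direct expansion shows
\[
A\Omega A^\top-B\Omega B^\top=(A-B)\Omega(A-B)^\top\succeq0,
\]
because the cross term satisfies $A\Omega B^\top=(G^\top\Omega^{-1}G)^{-1}=B\Omega B^\top$. Hence $W=\Omega^{-1}$ is optimal, and substituting it collapses the sandwich to $(G^\top\Omega^{-1}G)^{-1}$.

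The main obstacle I expect is the justification in Steps 2--3. The hypotheses state continuous differentiability only near $\theta_0$, so the mean-value expansion must be applied on the event that $\widehat\theta_n$ lies in that neighbourhood, and that event has probability tending to one by consistency. It also has to be checked that a minimizer over the compact $\Theta$ satisfies the interior first-order condition on this event. The remaining steps are routine algebra.
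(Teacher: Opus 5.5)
Your proposal is correct and follows the same route as the paper: a CLT for the bounded moment vector, the first-order condition of the quadratic-form objective, and a linearisation of $m$ around $\theta_0$ giving $\sqrt n(\widehat\theta_n-\theta_0)=(G^\top WG)^{-1}G^\top W\sqrt n(\widehat m_n-m(\theta_0))+o_p(1)$, followed by the sandwich and optimal-weight conclusion. The only differences are that you make explicit what the paper leaves to citation: the inequality $A\Omega A^\top-B\Omega B^\top=(A-B)\Omega(A-B)^\top\succeq0$, the extension of Proposition~\ref{prop:cons-est} to a general $W$, and the restriction of the first-order condition to the event that $\widehat\theta_n$ lies in the $C^1$ neighbourhood of $\theta_0$.
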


\begin{proof}
The estimator minimizes
$(\widehat m_n-m(\theta))^\top W(\widehat m_n-m(\theta))$, so it is the
minimum-distance (generalized method-of-moments) estimator \cite{hansen1982}
for the CF-matching conditions $\E\cos(u_mX)=\Rea f_\theta(u_m)$,
$\E\sin(u_mX)=\Ima f_\theta(u_m)$.  Its first-order condition
$G^\top W(\widehat m_n-m(\widehat\theta_n))=0$, the consistency of
$\widehat\theta_n$ (Proposition~\ref{prop:cons-est}), and the expansion
$m(\widehat\theta_n)=m(\theta_0)+G(\widehat\theta_n-\theta_0)
 +o_p(\|\widehat\theta_n-\theta_0\|)$ yield
$\sqrt n(\widehat\theta_n-\theta_0)
 =(G^\top WG)^{-1}G^\top W\sqrt n(\widehat m_n-m(\theta_0))+o_p(1)$.
Since $\widehat m_n$ is a sample mean of the bounded vector
$(\cos(u_mX),\sin(u_mX))_m$, the multivariate central limit theorem gives
$\sqrt n(\widehat m_n-m(\theta_0))\xrightarrow{d}\mathcal N(0,\Omega)$ with
$\Omega$ finite and free of moment assumptions; the sandwich form, and the
optimal weight $W=\Omega^{-1}$, follow \cite[Ch.~5]{vandervaart2000}.
\end{proof}

The nonsingularity of $\Omega$ used above is not an extra hypothesis but a
verifiable property of the grid, and it has a transparent linear-algebraic
characterisation.

\begin{proposition}[Nonsingularity of the moment covariance]\label{prop:omega}
Let $X$ be nondegenerate with a distribution whose support has an accumulation
point \textup{(}in particular whenever $X$ has a density, or any infinite
support\textup{)}, and let $u_1,\ldots,u_M$ be distinct and nonzero.  Then the
$2M$ centred functions
$\{\cos(u_m\cdot)-\E\cos(u_mX),\ \sin(u_m\cdot)-\E\sin(u_mX)\}_{m=1}^M$ are
linearly independent in $L^2(P_X)$, so the moment covariance $\Omega(U)$ is
nonsingular.  Equivalently, $\Omega(U)$ is singular if and only if some
nontrivial real combination $\sum_{m}\{a_m\cos(u_mx)+b_m\sin(u_mx)\}$ is
$P_X$-almost surely equal to a constant.
\end{proposition}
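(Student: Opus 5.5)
The plan is to handle the equivalence first, since it is pure linear algebra, and then to show that no nontrivial trigonometric combination can be a.s. constant under the support hypothesis. Write $Y=(\cos(u_1X),\sin(u_1X),\ldots,\cos(u_MX),\sin(u_MX))^\top$. By Proposition~\ref{prop:an-est}, $\Omega(U)=\Cov(Y)$, and its entries are finite because $Y$ is bounded. For $c=(a_1,b_1,\ldots,a_M,b_M)\in\R^{2M}$ we have $c^\top\Omega c=\operatorname{Var}(c^\top Y)$. So $\Omega$ is singular if and only if there is a $c\ne0$ with $\operatorname{Var}(c^\top Y)=0$. This happens exactly when $h_c(x)=\sum_m\{a_m\cos(u_mx)+b_m\sin(u_mx)\}$ is $P_X$-a.s. equal to a constant, which is the stated equivalence. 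Linear dependence of the centred functions in $L^2(P_X)$ is the same statement: some $c\ne0$ makes $h_c(X)-\E h_c(X)=0$ a.s. It therefore remains to show that $h_c(X)=\kappa$ a.s. forces $c=0$.

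\emph{Step 1: from almost sure to everywhere on the support.} Suppose $h_c(X)=\kappa$ a.s. Since $h_c$ is continuous, the set $\{x:h_c(x)\ne\kappa\}$ is open and has $P_X$-measure zero. It is therefore disjoint from the support of $P_X$, so $h_c=\kappa$ on the whole support, and in particular on a sequence of distinct points converging to the accumulation point $x^\ast$.

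\emph{Step 2: identity theorem.} The function $h_c(z)-\kappa$ extends to an entire function of $z\in\mathbb{C}$, being a finite combination of $e^{\pm\ii u_mz}$. Its zeros accumulate at $x^\ast$, so it vanishes identically, and hence $h_c\equiv\kappa$ on $\R$.

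\emph{Step 3: independence of exponentials.} Rewrite the identity as
\[
\sum_m\tfrac12(a_m-\ii b_m)e^{\ii u_mx}+\tfrac12(a_m+\ii b_m)e^{-\ii u_mx}-\kappa\equiv0 .
\]
For any real $\lambda$, apply the mean-value functional
\[
\lim_{T\to\infty}\frac{1}{2T}\int_{-T}^{T}(\cdot)\,e^{-\ii\lambda x}\,dx ,
\]
which extracts the coefficient of $e^{\ii\lambda x}$. When the frequencies $0,\pm u_1,\ldots,\pm u_M$ are pairwise distinct, taking $\lambda=\pm u_m$ gives $a_m\mp\ii b_m=0$, so $c=0$ (and also $\kappa=0$).

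The main obstacle is in Step 3. The hypothesis ``distinct and nonzero'' must be read as distinct $|u_m|$ (for instance, all $u_m>0$). If $u_2=-u_1$, then $\cos(u_2x)=\cos(u_1x)$ and $\sin(u_2x)=-\sin(u_1x)$, so $\Omega$ is genuinely singular. I would state this convention explicitly in the proof; it is consistent with the positive grids used in the remark on the Cauchy and stable families.

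Nondegeneracy of $X$ is implied by the accumulation-point hypothesis, so the argument does not use it separately. The parenthetical cases are routine: a law with a density has uncountable support, and any infinite closed support in $\R$ has an accumulation point because a closed discrete subset of $\R$ is countable with no limit point only if it is unbounded. That caveat means ``infinite support'' needs a bounded infinite subset of the support. I would flag this as well, and otherwise rely on the density case, where any point of the support is an accumulation point.
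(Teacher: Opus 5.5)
Your proposal is correct and follows the paper's own route: singularity of $\Omega(U)$ as a vanishing-variance (Gram) condition, continuity to pass from $P_X$-a.s.\ to the support, the identity theorem for the entire trigonometric sum, and linear independence of $\{1,\cos(u_m\cdot),\sin(u_m\cdot)\}_m$, with your Steps~1 and~3 supplying details the paper leaves implicit. Both of your caveats are genuine corrections that the paper's proof misses: independence needs distinct $|u_m|$ (for $u_2=-u_1$ the centred system is dependent), and infinite support alone does not give an accumulation point (integer-valued $X$ with $u_1=2\pi$ gives $\Omega(U)=0$), so your convention and your reading of the parenthetical are the right ones.
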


\begin{proof}
$\Omega(U)$ is the Gram matrix of the centred system in $L^2(P_X)$, hence
singular precisely when a nontrivial real combination
$\sum_m\{a_m\cos(u_mx)+b_m\sin(u_mx)\}-c$ vanishes $P_X$-almost surely.  Such a
finite trigonometric sum is a real-analytic \textup{(}indeed entire\textup{)}
function of $x$; if it vanishes on a set with an accumulation point it vanishes
identically on $\R$.  Distinct nonzero frequencies make
$\{1,\cos(u_m\cdot),\sin(u_m\cdot)\}_m$ linearly independent as functions on
$\R$, so all $a_m,b_m,c$ vanish.  Hence no nontrivial combination is almost
surely constant and $\Omega(U)\succ0$.  The argument uses only nondegeneracy of
$X$; no moment of $X$ enters.
\end{proof}

\begin{remark}[Conditioning and the excluded frequency]\label{rem:omega}
The off-diagonal entries of $\Omega(U)$ are real and imaginary parts of
$f_X(u_m\pm u_k)$, so two frequencies with $|u_m-u_k|\cdot s\ll1$, where $s$ is a
scale of $X$, give nearly parallel columns and an ill-conditioned $\Omega(U)$
even though it is invertible in the limit.  A practical rule is to keep the
spacing $\Delta u\gtrsim1/s$, with $s$ a robust dispersion such as the median
absolute deviation, and to monitor the condition number or smallest eigenvalue
of the sample $\widehat\Omega$, shrinking it toward its diagonal when this is
large \textup{(}Remark~\ref{rem:grid}\textup{)}.  The frequency $u=0$ is excluded
for the same reason: $f_X(0)=1$ for every law, so $\cos(0\cdot)\equiv1$ is the
constant already removed by centring, and including it makes the centred system
contain the zero vector and $\Omega(U)$ singular.
\end{remark}

The fixed-grid estimator is $\sqrt n$-consistent but not in general efficient.
The next two results show that efficiency is recovered as the grid is refined.
Write $J(U)=G^\top\Omega^{-1}G$ for the optimal-weight (efficient) information of
Proposition~\ref{prop:an-est}, so that $J(U)^{-1}$ is the smallest asymptotic
covariance attainable on the grid $U$.  Throughout we take the frequencies
nonzero, so that $\Omega(U)$ is nonsingular by Proposition~\ref{prop:omega}.
Recall the score $s=s_{\theta_0}$ of Assumption~\ref{asm:dqm}, and
for each $u$ let $m_u^{\mathrm c}=\cos(uX)-\E\cos(uX)$ and
$m_u^{\mathrm s}=\sin(uX)-\E\sin(uX)$ be the centred moment functions, with
$H(U)$ their closed linear span in $L^2(P_{\theta_0})$.

\begin{lemma}[Monotone efficiency]\label{lem:monotone}
For every finite grid $U$ of nonzero frequencies and every $a\in\R^d$,
$a^\top J(U)\,a=\|\Pi_{H(U)}(a^\top s)\|^2_{L^2(P_{\theta_0})}$, the squared norm
of the orthogonal projection of the scalar score combination $a^\top s$ onto
$H(U)$.  Consequently $J(U)\preceq J(U')$ in the Loewner order whenever
$U\subseteq U'$: refining or extending the grid never decreases the efficient
information.
\end{lemma}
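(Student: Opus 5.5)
The plan is to recognise $J(U)$ as the Gram-matrix expression for a least-squares projection in $L^2(P_{\theta_0})$. Write $m=(m_{u_1}^{\mathrm c},m_{u_1}^{\mathrm s},\ldots,m_{u_M}^{\mathrm c},m_{u_M}^{\mathrm s})^\top$ for the vector of centred moment functions. By Proposition~\ref{prop:an-est} and Lemma~\ref{lem:btf}, we have $G=\Cov(m,s)=\E[m\,s^\top]$, where centring is harmless because $\E s=0$. We also have $\Omega=\E[m\,m^\top]$, the Gram matrix of the $2M$ functions spanning $H(U)$. Since $H(U)$ is finite-dimensional, it is closed and equals the span of the components of $m$. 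By Proposition~\ref{prop:omega}, $\Omega$ is nonsingular, so these $2M$ functions form a basis of $H(U)$.

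Fix $a\in\R^d$ and set $Y=a^\top s\in L^2(P_{\theta_0})$; this is legitimate because $s\in L^2$ by Assumption~\ref{asm:dqm}. The projection onto $H(U)$ has the form $\Pi Y=\beta^\top m$, and the normal equations $\E[m(Y-\beta^\top m)]=0$ give $\Omega\beta=\E[mY]=Ga$. Hence $\beta=\Omega^{-1}Ga$. Then
\[
\|\Pi Y\|^2=\beta^\top\Omega\beta=a^\top G^\top\Omega^{-1}\Omega\,\Omega^{-1}Ga=a^\top G^\top\Omega^{-1}G\,a=a^\top J(U)a,
\]
which is the identity claimed in the lemma.

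For monotonicity, note that $U\subseteq U'$ means every centred moment function for $U$ also appears among those for $U'$. The functions $m_u^{\mathrm c},m_u^{\mathrm s}$ depend only on $u$, not on the grid, so $H(U)\subseteq H(U')$. For nested closed subspaces, $\Pi_{H(U)}=\Pi_{H(U)}\Pi_{H(U')}$, and projections are contractions. Therefore $\|\Pi_{H(U)}Y\|\le\|\Pi_{H(U')}Y\|$ for every $Y$. Applying this with $Y=a^\top s$ for every $a$ gives $a^\top J(U)a\le a^\top J(U')a$, which is the Loewner inequality $J(U)\preceq J(U')$.

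The main point requiring care is the identification $G=\E[m\,s^\top]$. The Jacobian is defined as $\partial m(\theta)/\partial\theta^\top$ of the model CF values, so this identification rests on Lemma~\ref{lem:btf} applied to the bounded functions $\cos(u\cdot)$ and $\sin(u\cdot)$. One must also check that replacing the uncentred by the centred functions changes nothing, which again uses $\E s=0$. The other subtlety is invertibility of $\Omega$ for the larger grid $U'$, which Proposition~\ref{prop:omega} supplies. Without it, one could still argue with a generalized inverse, since the projection norm depends only on the subspace and not on the chosen spanning set.
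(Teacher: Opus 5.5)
Your proposal is correct and follows essentially the same route as the paper: identify $G=\Cov(m,s)$ and $\Omega=\Cov(m)$ via Lemma~\ref{lem:btf} and $\E s=0$, recognise $G^\top\Omega^{-1}G$ as the least-squares (Gram-matrix) formula for the squared norm of the projection of $a^\top s$ onto $H(U)$, and deduce the Loewner inequality from $H(U)\subseteq H(U')$. The only differences are cosmetic. You spell out the normal-equation computation that the paper simply cites as the least-squares identity, and you use the contraction $\Pi_{H(U)}=\Pi_{H(U)}\Pi_{H(U')}$ where the paper invokes Pythagoras, which is the same fact.
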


\begin{proof}
Because $\cos(uX),\sin(uX)$ are bounded, Lemma~\ref{lem:btf} (under
Assumption~\ref{asm:dqm}) gives
$\partial_\theta\E_\theta\cos(uX)=\E[\cos(uX)\,s]$ and
$\partial_\theta\E_\theta\sin(uX)=\E[\sin(uX)\,s]$, the interchange of derivative
and expectation for these characteristic-function moments.
Hence the Jacobian is $G=\operatorname{Cov}(m,s)$ (using $\E s=0$) and the moment
covariance is $\Omega=\operatorname{Cov}(m)$, where $m$ stacks the centred
$m_u^{\mathrm c},m_u^{\mathrm s}$ over $u\in U$ (centring shifts the
$\widehat m_n,m(\theta)$ of Proposition~\ref{prop:an-est} by a deterministic
constant, so $G$ and $\Omega$ are unchanged).  Thus
$J(U)=G^\top\Omega^{-1}G
=\operatorname{Cov}(s,m)\operatorname{Cov}(m)^{-1}\operatorname{Cov}(m,s)$, and
for any $a\in\R^d$,
\[
  a^\top J(U)\,a
   =\operatorname{Cov}(a^\top s,m)\operatorname{Cov}(m)^{-1}
     \operatorname{Cov}(m,a^\top s)
   =\|\Pi_{H(U)}(a^\top s)\|^2,
\]
the last equality being the least-squares identity for the projection of
$a^\top s$ onto $\operatorname{span}(m)=H(U)$.  For $U\subseteq U'$ we have
$H(U)\subseteq H(U')$, and the projection norm is monotone under enlargement of
the subspace (Pythagoras; this step is independently checked in Lean,
Table~\ref{tab:lean}), so $a^\top J(U)\,a\le a^\top J(U')\,a$ for all $a$,
i.e. $J(U)\preceq J(U')$.
\end{proof}

\begin{theorem}[Efficiency in the dense-grid limit]\label{thm:dense}
Suppose Assumption~\ref{asm:dqm} holds, with score $s\in L^2(P_{\theta_0})$ and
nonsingular Fisher information $I(\theta_0)$.
Let $U_1\subseteq U_2\subseteq\cdots$ be finite grids of nonzero frequencies
whose union is dense in $\R$.  Then $J(U_M)\uparrow I(\theta_0)$ in the Loewner
order, and hence the optimally weighted minimum-CF-distance estimator has
asymptotic covariance $J(U_M)^{-1}\downarrow I(\theta_0)^{-1}$: it is
asymptotically efficient in the limit of a dense grid.
\end{theorem}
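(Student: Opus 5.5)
By Lemma~\ref{lem:monotone}, for each fixed $a\in\R^d$ the sequence $a^\top J(U_M)a=\|\Pi_{H(U_M)}(a^\top s)\|^2$ is nondecreasing in $M$. Since $H(U_M)\subseteq L^2_0(P_{\theta_0})$ (the mean-zero subspace) and $a^\top s\in L^2_0$ by Lemma~\ref{lem:btf}, each term is bounded by $\|a^\top s\|^2=a^\top I(\theta_0)a$. So the plan is to show that the increasing union $H_\infty=\overline{\bigcup_M H(U_M)}$ equals all of $L^2_0(P_{\theta_0})$. Then standard Hilbert-space projection continuity along an increasing chain of closed subspaces gives $\Pi_{H(U_M)}(a^\top s)\to\Pi_{H_\infty}(a^\top s)=a^\top s$ in $L^2$. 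Hence $a^\top J(U_M)a\uparrow a^\top I(\theta_0)a$ for every $a$, which, by polarization, is Loewner-monotone convergence $J(U_M)\uparrow I(\theta_0)$. For the covariance statement, the $J(U_M)$ are eventually nonsingular: $J(U_M)\to I(\theta_0)\succ0$, so they are positive definite for large $M$. Inversion is continuous and order-reversing on positive-definite matrices, which gives $J(U_M)^{-1}\downarrow I(\theta_0)^{-1}$. Proposition~\ref{prop:an-est} then identifies this as the optimal-weight asymptotic covariance.

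The main step is the density claim $H_\infty=L^2_0(P_{\theta_0})$. I would argue by orthogonal complement. Let $g\in L^2_0$ be orthogonal to every $m_u^{\mathrm c},m_u^{\mathrm s}$ with $u\in\bigcup_M U_M$. Because $\E g=0$, orthogonality to the centred functions is the same as $\E[g(X)\cos(uX)]=\E[g(X)\sin(uX)]=0$, that is, $\E[g(X)e^{\ii uX}]=0$ on a dense set of $u$.

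Next, $g\in L^2\subset L^1(P_{\theta_0})$, so the signed finite measure $\nu(dx)=g(x)P_{\theta_0}(dx)$ has a Fourier transform $\widehat\nu(u)=\E[g(X)e^{\ii uX}]$ that is continuous in $u$ by dominated convergence. Vanishing on a dense set therefore forces $\widehat\nu\equiv0$ on $\R$. This includes $u=0$, where $\widehat\nu(0)=\E g=0$ holds anyway. Fourier uniqueness for finite signed measures gives $\nu=0$, so $g=0$ $P_{\theta_0}$-a.s. Hence $H_\infty^\perp\cap L^2_0=\{0\}$, as required.

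I expect the obstacle to be precisely this passage from a dense frequency set to all frequencies. Continuity of $\widehat\nu$ handles it, and no moments of $X$ enter because $g\,dP$ is a finite measure for any $g\in L^2$. A secondary care point is that the projection onto $H(U_M)$ coincides with the formula $\Cov(s,m)\Cov(m)^{-1}\Cov(m,s)$. This requires $\Omega(U_M)$ to be nonsingular, which Proposition~\ref{prop:omega} supplies under the standing nondegeneracy assumptions. Without it one would use a generalized inverse, and the projection identity would still hold.
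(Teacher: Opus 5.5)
Your proof is correct and follows essentially the same route as the paper. It uses the monotone projection norms of Lemma~\ref{lem:monotone}, bounded by $a^\top I(\theta_0)a$; density of $\bigcup_M H(U_M)$ in $L^2_0(P_{\theta_0})$ via Fourier uniqueness for the finite signed measure $g\,dP_{\theta_0}$; and convergence of projections along the increasing chain of subspaces. The differences are minor: you pass from the dense frequency set to all of $\R$ by continuity of $\widehat\nu$ rather than by the paper's $L^2$-continuity of $u\mapsto e^{\ii uX}$, and you work directly in $L^2_0$ instead of proving totality in $L^2$ and splitting off $\R\cdot 1$. Your remark that $J(U_M)$ need only be \emph{eventually} nonsingular before inverting is a correct refinement that the paper leaves implicit.
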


\begin{proof}
Fix $a\in\R^d$.  By Lemma~\ref{lem:monotone},
$a^\top J(U_M)\,a=\|\Pi_{H(U_M)}(a^\top s)\|^2$ is nondecreasing in $M$ and
bounded above by $\|a^\top s\|^2=a^\top I(\theta_0)a$.  The maps
$u\mapsto\cos(uX),\sin(uX)$ are continuous into $L^2(P_{\theta_0})$, since
$\E|e^{\ii uX}-e^{\ii vX}|^2=2(1-\Rea f_{\theta_0}(u-v))\to0$ as $v\to u$; hence
the closed span of the centred system over $\bigcup_M U_M$ equals its closed
span over all nonzero frequencies.  The real system
$\{\cos(uX),\sin(uX):u\in\R\}$ is total in $L^2(P_{\theta_0})$: if
$g\in L^2(P_{\theta_0})$ is orthogonal to all of them, then
$g\in L^1(P_{\theta_0})$ by Cauchy--Schwarz, and
$\E[g\,e^{\ii uX}]=\E[g\cos(uX)]+\ii\,\E[g\sin(uX)]=0$ for every $u$, so the
finite signed measure $g\,dP_{\theta_0}$ has identically vanishing Fourier
transform and is therefore null (uniqueness of the Fourier transform of a finite
measure), whence $g=0$.  Because the constant $1$ is orthogonal to every centred
$m_u^{\mathrm c},m_u^{\mathrm s}$ and $\cos(uX)=m_u^{\mathrm c}+\E\cos(uX)$, the
orthogonal decomposition
$L^2(P_{\theta_0})=\R\!\cdot\!1\oplus\overline{\operatorname{span}}\,
\{m_u^{\mathrm c},m_u^{\mathrm s}:u\in\R\}$ identifies the closed centred span
with $L^2_0(P_{\theta_0})$; with the previous sentence, $\bigcup_M H(U_M)$ is
dense in $L^2_0(P_{\theta_0})$.  Each component of $s$ lies in
$L^2_0(P_{\theta_0})$ (it has mean zero), so projection onto the increasing
subspaces converges: $\Pi_{H(U_M)}(a^\top s)\to a^\top s$, whence
$a^\top J(U_M)\,a\uparrow a^\top I(\theta_0)a$.  As $a$ was arbitrary,
$J(U_M)\uparrow I(\theta_0)$ in the Loewner order, and inversion reverses it.
\end{proof}

\begin{remark}
Theorem~\ref{thm:dense} recovers, by an elementary projection argument, the
asymptotic efficiency of empirical-characteristic-function estimation
established for a continuum of frequencies by Feuerverger and
McDunnough~\cite{feuerverger1981}; the operator-theoretic optimal weighting of
the continuum limit is treated by Carrasco and Florens~\cite{carrasco2000}.
\end{remark}

\begin{remark}[Choice of grid and weights]\label{rem:grid}
A bounded grid with $u_{\max}\approx\pi/s$, where $s$ is a robust scale such as
the median absolute deviation, and with $M$ between roughly $20$ and $64$
log-spaced points is adequate; large frequencies carry little information,
because $|f_\theta(u)|$ is then small and the empirical CF is dominated by
sampling noise.  If the grid scales with a robust scale estimate, the estimator
is location--scale equivariant.  Optimal weighting requires an estimate of
$\Omega$, which is noisy for small $n$; uniform weights are then preferable, a
standard small-sample caveat for the generalized method of moments
\cite{hansen1982}.
\end{remark}

\section{Lean 4 verification scope}

Lean 4~\cite{demoura2021} and its mathematical library
Mathlib~\cite{mathlib2020} are used only to audit deterministic identities.  The
self-contained \texttt{KunchenkoCF} library accompanying this paper is provided
in the supplement (\texttt{lean\_project/}); together with the replication
scripts of the numerical illustration it is publicly available at
\url{https://github.com/SZabolotnii/Ku_CF-code-supplement}.  The command
\texttt{lake build KunchenkoCF} completes successfully with Lean \texttt{v4.26.0}
and Mathlib
\texttt{v4.26.0}, with no \texttt{sorry} and no added \texttt{axiom}.  Besides
the bounded-score identities it includes a machine-checked proof that the
orthogonal-projection norm is monotone under subspace inclusion, the
deterministic core of Lemma~\ref{lem:monotone}.

\begin{table}[h]
\caption{Lean lemmas used as deterministic audit points.}
\label{tab:lean}
\centering
\scriptsize
\begin{tabular}{p{0.31\linewidth}p{0.49\linewidth}p{0.12\linewidth}}
\toprule
Lean lemma & Manuscript role & Scope\\
\midrule
\texttt{abs\_cfScore\_le\_one} & boundedness of $\sin(ur)$ & algebraic\\
\texttt{cfScore\_odd} & oddness of the residual score & algebraic\\
\texttt{cauchyWeight\_mul\_eq\_mle} & Cauchy bridge identity & algebraic\\
\texttt{ident\_of\_unit\_det} & linear identifiability core & linear\\
\texttt{norm\_starProjection\_mono} & monotone projection norm (Lemma~\ref{lem:monotone}) & linear\\
\bottomrule
\end{tabular}
\end{table}

The Lean supplement does not formalize characteristic-function differentiability,
complex ECF distance, the strong law of large numbers, asymptotic normality, or
global statistical identifiability.  These statements are deliberately kept in
the non-formalized proof layer.  In particular, the finite-grid strong law
underlying Proposition~\ref{prop:consist} is a direct corollary of the Mathlib
strong law for bounded random variables and could be formalized; we keep it in
the non-formalized layer so that the formal audit remains purely algebraic.

\section{Numerical illustration}

The numerical example is included only to show the behavior of the finite-grid
definition.  We used the grid
$U=\{0.2,0.4,\ldots,2.4\}$ with uniform weights and computed
$D_{U,w}(\widehat f_n,f)$ for three distributions with known characteristic
functions:
\[
  f_G(u)=\exp(-u^2/2),\qquad
  f_C(u)=\exp(-|u|),\qquad
  f_\alpha(u)=\exp(-|u|^{1.3}).
\]
For each sample size, 300 independent replications were used.  Figure
\ref{fig:ecf} shows that the finite-grid discrepancy decreases with $n$ for
Gaussian, Cauchy, and symmetric stable samples.  This is an illustration of
Proposition~\ref{prop:consist}, not a benchmark of a full estimator.

\begin{figure}[h]
\centering
\includegraphics[width=0.88\linewidth]{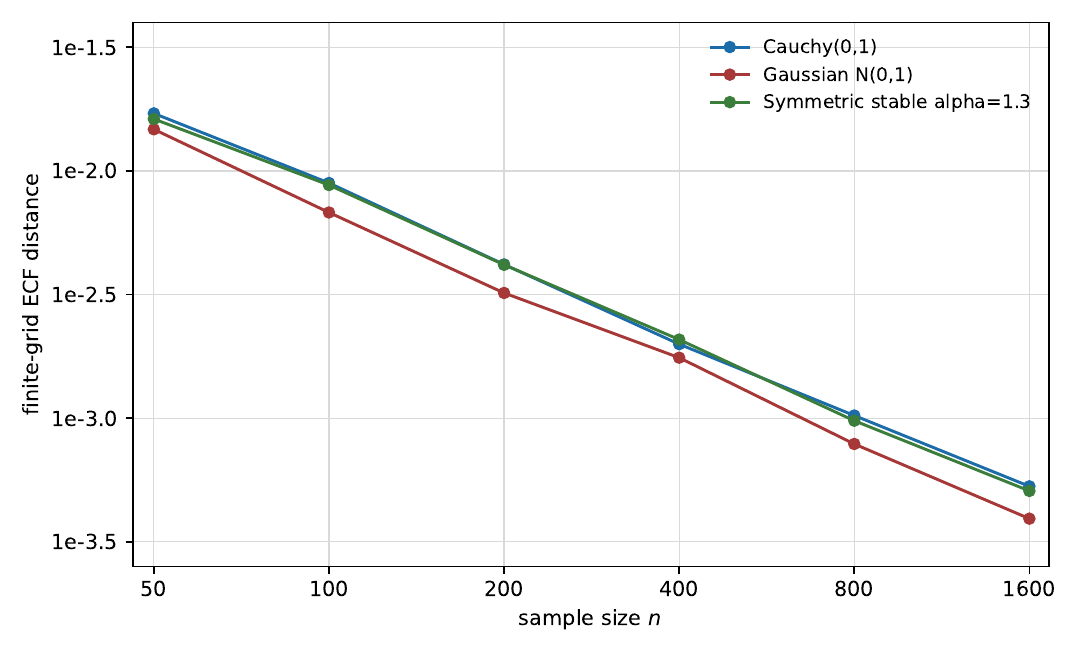}
\caption{Finite-grid ECF distance for Gaussian, Cauchy, and symmetric
stable ($\alpha=1.3$) samples.  The plot uses the closed-form characteristic
function of each law and a fixed 12-point frequency grid.}
\label{fig:ecf}
\end{figure}

\begin{table}[h]
\caption{Moment status of representative laws.}
\label{tab:moments}
\centering
\begin{tabular}{p{0.28\linewidth}ccc p{0.25\linewidth}}
\toprule
Law & Mean & Variance & Fourth moment & Safe language\\
\midrule
Gaussian $N(0,1)$ & yes & yes & yes & covariance and MSE\\
Student $t_5$ & yes & yes & yes & finite fourth moment\\
Student $t_3$ & yes & yes & no & no kurtosis route\\
Cauchy $(0,1)$ & no & no & no & CF distance only\\
Symmetric stable $\alpha=1.3$ & yes & no & no & mean only; CF distance\\
\bottomrule
\end{tabular}
\end{table}

We also illustrate the minimum-CF-distance estimator of
Definition~\ref{def:cfpmm}.  For the Cauchy location--scale family and the
symmetric stable family we drew $300$ independent samples at each size
$n\in\{100,200,400,800,1600\}$, minimized $D_{U,w}(\widehat f_n,f_\theta)$ over
a median-absolute-deviation grid of $24$ frequencies with uniform weights, and
compared the result with standard references: the median and the interquartile
range for the Cauchy location and scale, the
Koutrouvelis~\cite{koutrouvelis1980} regression for the stable parameters, and
the sample mean and standard deviation.  Table~\ref{tab:est} reports the
root-mean-square error.  The minimum-CF-distance estimator is essentially
unbiased and its error decreases at the parametric rate, in agreement with
Propositions~\ref{prop:cons-est} and~\ref{prop:an-est}; it is competitive with
the purpose-built robust and regression references, while the sample-moment
estimators diverge, having no finite target for these laws.
Figure~\ref{fig:est} shows the stable-index error against $n$ on a log--log
scale for the CF estimator and the Koutrouvelis regression: both decrease at the
same rate, with the regression slightly more accurate at these sample sizes,
consistent with the fixed-grid, uniform-weight choice discussed in
Remark~\ref{rem:grid}.  Estimates from linear, logarithmic, and
median-absolute-deviation grids agree in root-mean-square error to within six
percent, so the conclusions are not an artifact of the grid.

\begin{figure}[h]
\centering
\includegraphics[width=0.88\linewidth]{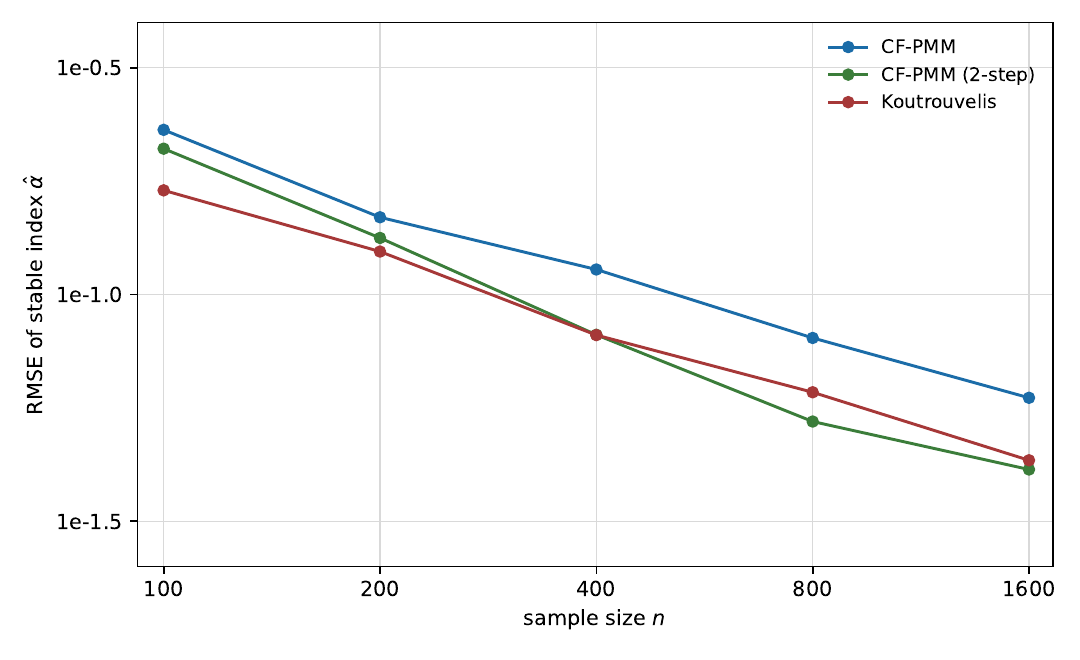}
\caption{Root-mean-square error of the symmetric stable index estimate
$\widehat\alpha$ (true $\alpha=1.3$) against sample size $n$, on a log--log
scale, for the uniform-weight minimum-CF-distance estimator (CF-PMM), its
two-step optimally-weighted version, and the Koutrouvelis regression
($300$ replications).}
\label{fig:est}
\end{figure}

\begin{table}[h]
\caption{Monte Carlo root-mean-square error of parameter estimates
($300$ replications).  The sample-moment estimators are included only to show
the breakdown of moment-based methods; they have no finite target for these
laws.}
\label{tab:est}
\centering
\small
\begin{tabular}{llcc}
\toprule
Law and parameter & Estimator & $n=200$ & $n=800$\\
\midrule
Cauchy location $\delta$ & CF-PMM & 0.113 & 0.054\\
                         & median & 0.117 & 0.055\\
                         & sample mean & 6.81 & 82.5\\
Cauchy scale $\gamma$    & CF-PMM & 0.105 & 0.056\\
                         & IQR$/2$ & 0.105 & 0.058\\
\midrule
Stable index $\alpha$    & CF-PMM & 0.148 & 0.080\\
                         & CF-PMM (2-step) & 0.133 & 0.053\\
                         & Koutrouvelis & 0.124 & 0.061\\
Stable scale $\gamma$    & CF-PMM & 0.077 & 0.041\\
                         & Koutrouvelis & 0.089 & 0.044\\
\bottomrule
\end{tabular}
\end{table}

To probe the asymptotics of Proposition~\ref{prop:an-est} we add two checks.
First, the two-step estimator with the optimal weight $W=\Omega^{-1}$ (the
sample moment covariance shrunk toward its diagonal for numerical stability)
lowers the stable-index error below both the uniform-weight estimator and the
Koutrouvelis regression once $n\ge800$ (Figure~\ref{fig:est} and
Table~\ref{tab:est}: RMSE $0.053$ against $0.080$ and $0.061$ at $n=800$), as
the efficiency statement predicts.  Second, percentile-bootstrap $95\%$
confidence intervals ($120$ resamples) have empirical coverage between $0.89$
and $0.96$ across both families and all parameters at $n=400$ and $n=1600$,
close to the nominal level and consistent with the asymptotic normality of
Proposition~\ref{prop:an-est}.  At small $n$ the optimally-weighted estimator
needs the diagonal shrinkage to avoid the ill-conditioning of $\Omega$, in line
with Remark~\ref{rem:grid}.

A final ablation \textup{(}in the replication script\textup{)} probes the grid
and the weighting.  Raising the upper frequency from $0.5\,\pi/\mathrm{MAD}$ to
$2\,\pi/\mathrm{MAD}$ keeps the stable-index error in the narrow band
$0.087$--$0.123$ at $n=400$, the smaller grids slightly better: large
frequencies never destabilise the estimator, since the empirical characteristic
function has variance at most $(1-|f_\theta(u)|^2)/n\le1/n$ at every $u$, so
high-frequency noise is bounded and merely uninformative
\textup{(}Remark~\ref{rem:grid}\textup{)}.  On the dense grid $\Omega$ is
numerically singular \textup{(}condition number $\approx10^{17}$\textup{)}, the
structural near-collinearity of Proposition~\ref{prop:omega} and
Remark~\ref{rem:omega}; diagonal shrinkage of intensity $0.6$ lowers the
condition number to about $55$ and the stable-index error to $0.084$, whereas the
unshrunk inverse gives the far larger error $0.70$.  A data-driven Ledoit--Wolf
intensity is small \textup{(}$\approx0.02$\textup{)} because the cosine and sine
correlations are genuine structure rather than sampling noise, so it
under-shrinks; the asymptotics make any consistent weight first-order optimal
\textup{(}Proposition~\ref{prop:an-est}\textup{)}, but at finite $n$ a fixed
diagonal or ridge shrinkage is the effective stabiliser of the ill-conditioned
grid, which is why uniform weights remain the practical default until $n$ is
large \textup{(}Remark~\ref{rem:grid}\textup{)}.

\section{Discussion and limitations}

The construction connects the finite-variance trigonometric Kunchenko normal
system with a moment-free empirical CF geometry.  It also gives a bounded-score path toward
weak stochastic-polynomial estimating equations.  The bridge is useful because
it prevents a common overclaim: a bounded sine basis and a CF-distance geometry
are related but not identical.  A simple trigonometric basis is not by itself a
CF-aware method unless the construction uses empirical or model characteristic
functions explicitly.

Several extensions are natural but outside this paper.  We have treated the
minimum-CF-distance estimator on a fixed grid and proved its efficiency in the
dense-grid limit (Theorem~\ref{thm:dense}); a data-driven optimal weighting and
the operator-theoretic continuum analogue remain to be developed.  Dependent
data are a further natural extension: because the moment functions
$\cos(uX),\sin(uX)$ are bounded, the estimating equations obey a central limit
theorem under standard strong-mixing conditions, with $\Omega$ replaced by a
long-run \textup{(}heteroskedasticity- and autocorrelation-consistent\textup{)}
covariance and consistency inherited from an ergodic strong
law~\cite{bradley2005}; the dense-grid efficiency statement, however, would
require the dependent-data information bound in place of the independent-case
Fisher information and is left open.  Heavy-tailed autoregressive models with
Cauchy innovations, a representative dependent-data instance, have been treated
by tailored expectation--maximization schemes~\cite{dhull2022}.
A characteristic-function classification theory would use classwise
log-CF-distance features, and a characteristic-function change-point theory
would develop sequential contrasts and their false-alarm calibration.  Neither
is claimed here.  The present
paper is confined to the characteristic-function geometry and the
minimum-CF-distance estimator for parametric families, and its verification
supplement is self-contained.

The Lean component should also be read conservatively.  It checks selected
deterministic identities used in the bounded-score construction; it does not
replace probabilistic proof.

\section{Conclusion}

Characteristic functions give a mathematically natural way to extend
Kunchenko stochastic-polynomial methods from moment-based finite-variance
settings to moment-free heavy-tailed settings.  In the finite-variance
trigonometric case, the covariance normal system can be expressed
through $f_X$ and $f'_X$.  In the moment-free case, empirical characteristic
functions on finite grids define bounded discrepancies without requiring raw
moments.  This yields a cautious but useful core: characteristic-function
geometry first, Kunchenko stochastic-polynomial interpretation second, Lean as
a deterministic audit trail third, and simulations only as illustration.

\end{document}